\documentclass[copyright]{eptcs} %preliminary,creativecommons
 % Name of the event you are submitting to
%\usepackage{breakurl}             % Not needed if you use pdflatex only.

%definitions
%---------------------------------------------------------------------
%the definitions used
%---------------------------------------------------------------------

%
%---------------------------packages begin---------------------------------------------------------------
\usepackage[usenames]{color}
\usepackage{makeidx}
\usepackage{verbatim}
\usepackage{fancyhdr}
\usepackage{fancybox}
\usepackage{stmaryrd}
\usepackage[reqno]{amsmath}
\usepackage{amsthm}
\usepackage{amssymb}
\usepackage{amsfonts}
\usepackage{xypic}
\usepackage {indentfirst}
\usepackage{graphicx}
\DeclareGraphicsRule{.jpg}{eps}{.bb}{}
\DeclareGraphicsRule{.bmp}{eps}{.bb}{}
\usepackage[hang]{subfigure}
\usepackage{cite}

\usepackage{proof}

\usepackage{bbding}

%\usepackage{hyperref}

%\usepackage[left=1in,right=1in,top=1in,bottom=1in,bindingoffset=1cm,a4paper]{geometry}
%\usepackage[a4paper,tmargin=0.8in,bmargin=0.85in,lmargin=1.4in,rmargin=1.4in]{geometry}

%\usepackage{hyperref} % 最好保证 hyperref 是最后加载的宏包
%\hypersetup{%
%  dvipdfmx,% 设定要使用的 driver 为 dvipdfmx
%  unicode={true},% 使用 unicode 来编码 PDF 字符串
%  pdfstartview={FitH},% 文档初始视图为匹配宽度
%  bookmarksnumbered={true},% 书签附上章节编号
%  bookmarksopen={true},% 展开书签
%  pdfborder={0 0 0},% 链接无框
%  pdftitle={标题},
%  pdfauthor={作者},
%  pdfsubject={主题},
%  pdfkeywords={关键词},
%  pdfcreator={应用程序},
%  pdfproducer={PDF 制作程序},% 这个好像没起作用？
%}

%---------------------------packages end----------------------------------------------------------------

%
%---------------------------thms begin--------------------------------------------------------------------
%%\newtheorem{theorem}{Theorem}
%%\newtheorem{proposition}[theorem]{Proposition}
%%\newtheorem{lemma}[theorem]{Lemma}
%%\newtheorem{corollary}[theorem]{Corollary}
%%\newtheorem{conjecture}[theorem]{Conjecture}
%%
%%\theoremstyle{definition}
%%\newtheorem{definition}[theorem]{Definition}
%%\newtheorem{example}[theorem]{Example}
%%\newtheorem{exercise}[theorem]{Exercise}
%%
%%\theoremstyle{remark}
%%\newtheorem{remark}[theorem]{Remark}
%%\newtheorem{notes}[theorem]{Notes}
%%\newtheorem{convention}[theorem]{Convention}

\newtheorem{theorem}{Theorem}%[chapter]

\newtheorem{lemma}[theorem]{Lemma}

\theoremstyle{definition}
\newtheorem{definition}[theorem]{Definition}

\theoremstyle{remark}

%---------------------------thms end---------------------------------------------------------------------

%
%---------------------------common 1 begin-------------------------------------------------------------------
%\newcommand{\HOPi}{HOPi }
%\newcommand{\LHOPi}{LHOPi }
%\newcommand{\FOPi}{FOPi }
%\newcommand{\HOCCS}{HOCCS }
%\newcommand{\CHOCS}{CHOCS }
%\newcommand{\PC}{Plain CHOCS }

%\newcommand{\FOPi}{$\pi$ }

%\newcommand{\HOPi}{HOPi~}
%\newcommand{\LHOPi}{LHOPi~}
%\newcommand{\FOPi}{FOPi~}
%\newcommand{\HOCCS}{HOCCS~}
%\newcommand{\CHOCS}{CHOCS~}
%\newcommand{\PC}{Plain CHOCS~}

%

%
\newcommand{\para}{\,|\,}
\newcommand{\fosub}[2]{\{#1/#2\} }
\newcommand{\hosub}[2]{\{#1/#2\} }
%dynamic substitution

%
\newcommand{\vect}[1]{{#1_1,#1_2,...,#1_n} }
\newcommand{\ve}[1]{\widetilde{#1}}
\newcommand{\size}[1]{|#1|}

%strong transition
\newcommand{\st}[1]{{\xrightarrow{#1}} }
%weak transition
\newcommand{\wt}[1]{{\stackrel{#1}{\Longrightarrow}} }

\newcommand{\nsepv}[1]{\vspace{0mm}}

\newcommand{\DEF}{\stackrel{def}{=}} %definition

%add to original head_pc
\newcommand{\lrangle}[1]{\langle #1 \rangle} % for abstraction and etc.

\title{On Context Bisimulation for Parameterized Higher-order Processes}
\author{Xian Xu \thanks{The author acknowledges the support by the National Nature Science Foundation of China (61202023,61261130589,61173048), and the ANR project 12IS02001 PACE.}
\institute{Department of Computer Science and Technology}
\institute{East China University of Science and Technology, Shanghai, China P.R. (200237)}
\email{xuxian@ecust.edu.cn, xuxian2004@gmail.com}
% \and
% Co Author \qquad\qquad Yet S. Else
% \institute{Stanford Univeristy\\
% California, USA}
% \email{\quad is@gmail.com \quad\qquad somebody@else.org}
}

\begin{document}
\maketitle

\begin{abstract}
This paper studies context bisimulation for higher-order processes, in the presence of parameterization (viz. abstraction). We show that the extension of higher-order processes with process parameterization retains the characterization of context bisimulation by a much simpler form of bisimulation called normal bisimulation (viz. they are coincident), in which universal quantifiers are eliminated; whereas it is not the same with name parameterization. These results clarify further the bisimulation theory of higher-order processes, and also shed light on the essential distinction between the two kinds of parameterization. 
\vspace*{.1cm}

\noindent\textbf{Keywords}: Context Bisimulation, Normal Bisimulation, Higher-order, Processes
\end{abstract}

%\section{Introduction}

%----------------------------------------------------------------------------------------------
%misc.
%\begin{center}
%\fbox{
%\begin{minipage}{9cm}
%Outline of Section Introduction. 
%\begin{enumerate}
%
%\item Contribution
%
%\item Related work and motivation
%\item Structure of the paper.
%%\item
%
%\end{enumerate}
%
%\end{minipage}
%}
%\end{center}

%-------------------------------------------------------------------------------------

%--------------------------------------------------------------------------------------------
%\section*{Memo}
% notes

\vspace*{-.7cm}
%\section{Related work and motivation}
\section{Introduction}
Higher-order processes differ from first-order (name-passing) ones in that they can transmit themselves (i.e. integral programs) in communication. This mechanism of process-passing provides an alternative yet essentially distinct way from name-passing to achieve mobility. That distinction lies in several aspects, among which the behavioral theory is of pivotal importance. As a basis of behavioral theory, bisimulation theory has been studied since the very early work on higher-order processes. The most well-know (and probably standard) bisimulation is context bisimulation \cite{San92}. It was proposed to improve on the previous forms of bisimulation, including (applicative) higher-order bisimulation \cite{Tho90}\cite{Tho93}, by considering the sent process and residual process at the same time. It then continued to draw attention \cite{San94}\cite{VD98}\cite{JR05}\cite{Cao06}\cite{LPSS08}\cite{SKS11}. %, for example in the respect of extension to more general form.    
%\vspace*{-.3cm}

\subsection*{Related work and motivation}
Context bisimulation, in its original form, is not so convenient in that, for example, it involves universal quantifiers when comparing output (as well as input), i.e. in the output clause of the definition of context bisimulation one has to check the matching of the output action with respect to all possible contexts. Here matching means the output action of one process can be simulated with an output by the other and the resulting derivatives are bisimilar again. That is a particularly heavy burden. This situation is improved by the so-called normal bisimulation, which characterizes context bisimulation (i.e. coincident with it) but requires only the checking with some special context (see an example below) \cite{San92}\cite{San94}\cite{Cao06}. 
%More specifically, the work in \cite{San94} shows that in a basic higher-order calculus (i.e. only process-passing and no parameterization), context bisimulation can be characterized by normal bisimulation. 

%[\rc{Simplify this paragraph...}] 
A common methodology for establishing such a characterization in \cite{San92}\cite{San94}\cite{Cao06} consists of two main ingredients: (1) Showing a factorization theorem using triggers. 
The factorization theorem provides a mechanism of relocating a (sub)process to a new place, and setting up a pointer to the new place for potential use. Such a pointer is represented by a trigger. 
(2) Showing the coincidence of context and normal bisimulation with the help of an intermediate bisimulation equivalence called triggered bisimulation, which is defined on a subclass of higher-order processes communicating only triggers. This design of normal bisimulation, particularly the special context used in it, is guided by the factorization theorem as described above.
%The special context used in normal bisimulation takes the form $!m.[\cdot]$ in which $[\cdot]$ stands for a hole to be filled in with a concrete process.  
%Later in \cite{Cao06}, where the language under examination also contains name-passing but no parameterization, the characterization is generalized to strong versions of context and normal bisimulation, with the help of the indexed technique that attaches labels to the actions of processes.   

%[\rc{Such a method can be improved...}]
%\sepp

We exemplify below the simplification of context bisimulation by normal bisimulation in basic higher-order processes \cite{San94}, equipped with the basic operators including input prefix ($a(X).P$ in which $X$ is a bound variable), output prefix ($\overline{a}A.P$), parallel composition ($P\para Q$) and restriction ($(c)P$ in which $c$ is a bound name). 
Notice that $E[X]$ denotes a process with the free occurrence of $X$ (i.e. not bound by an input prefix $a(X).P$), $E[A]$ denotes substituting $A$ for all free occurrences of $X$ in $E[X]$, output action $(\ve{c})\overline{a}A$ means sending over (port) name $a$ a process $A$ containing a set $\ve{c}$ of bound names, and the replication operation $!m.A$ is a derivable one in a higher-order setting \cite{Tho93}. We use $\approx$ for context bisimilarity (i.e. the equivalence induced by context bisimulation), and $\cong$ for normal bisimilarity (complete definitions are given in Definitions \ref{context-bisimulation},\ref{normal-bisi-bigd}). 
First of all, the factorization theorem is stated as below ($m$ is fresh, i.e. it does not occur in $E[A]$).
\[
E[A] \approx (m)(E[Tr_m]\para !m.A), \qquad \mbox{ where the trigger }Tr_m\equiv \overline{m}.0
\] Intuitively, using factorization theorem one can extract from $E[A]$ the process $A$ that might cause difference to its behavior. 
%In the case of \cite{San94}, it can be exemplified as below (the operations have their usual CCS-like meaning, and $\overline{m}.0$ is the trigger)
%\vspace*{-.3cm}
For instance, 
\[
A\para Q \approx (m)((\overline{m}.0\para Q) \para !m.A)
%\vspace*{-.1cm}
\]
% in which $m$ does not occur in $P$ and $Q$ and $\approx$ is the context bisimulation equivalence. 

Now suppose $P$ and $Q$ are basic higher-order processes, and are bisimilar with respect to either bisimulation equivalence (i.e. context bisimulation or normal bisimulation). We focus on the output clauses of context bisimulation and normal bisimulation, since the input case is similar. Note $fn(E[X])$ returns the free names of $E[X]$ (i.e. not bound by restriction), and for simplicity we do not claim the existence of $\ve{d},B,Q'$ below. 
\begin{itemize}
\item The output clause of context bisimulation is \\
 If $P\st{(\ve{c})\overline{a}A} P'$, then $Q\wt{(\ve{d})\overline{a}B}Q'$ s.t. for every $E[X]$ with $fn(E[X])\cap (\ve{c}\cup \ve{d}) = \emptyset$, $(\ve{c})(P'\para E[A])\approx (\ve{d})(Q'\para E[B])$.

\item The output clause of normal bisimulation is \\
 If $P\st{(\ve{c})\overline{a}A} P'$, then $Q\wt{(\ve{d})\overline{a}B}Q'$ s.t. for some fresh $m$, $(\ve{c})(P'\para !m.A) \cong (\ve{d})(Q'\para !m.B)$.
\end{itemize}
The intuition behind the simplification is that using the factorization theorem one can extract process $A$ (respectively $B$) in $E[A]$ (respectively $E[B]$), so in context bisimulation one obtains the following, due to the congruence property of $\approx$,
%\[
%(\ve{c})(P'\para (m)(E[Tr_m]\para !m.A)) \approx (\ve{c})(P'\para E[A])\approx (\ve{d})(Q'\para E[B]) \approx (\ve{d})(Q'\para (m)(E[Tr_m]\para !m.B))
%\]
%\begin{figure}[hbtp]
\begin{center}
\xymatrix{
R_1\DEF (\ve{c})(P'\para (m)(E[Tr_m]\para !m.A)) \ar@{.}[d]^{\approx} \ar@{.}[r]^{\approx} & (\ve{d})(Q'\para (m)(E[Tr_m]\para !m.B))\DEF R_2 \ar@{.}[d]^{\approx} \\
(\ve{c})(P'\para E[A]) \ar@{.}[r]^{\approx} & (\ve{d})(Q'\para E[B])
}
\end{center}
where each dotted line connects two processes that are related by context bisimilarity ($\approx$). 
Then by the congruence property of $\approx$, one can eliminate the common part in processes $R_1$ and $R_2$ (i.e. $E[Tr_m]$ and restriction on $m$), and thus arrive at the form of the output clause in normal bisimulation.

%\sep
%[\rc{adjust after the exemplification here...}]

The above has explained how the normal bisimulation works in basic higher-order processes \cite{San94}\cite{Cao06}. This paper concentrates on the extension of normal bisimulation, including its relevant technique, to higher-order processes with parameterization (also known as abstraction, akin to that in lambda-calculus; we postpone some examples until before stating the contribution), which still lacks full discussion. 

Although parameterization is considered in \cite{San92}, the definition of higher-order processes in \cite{San92} includes both name-passing and process-passing, and the coincidence between context bisimulation and normal bisimulation is studied in such a mixed language. So one does not know clearly whether the normal bisimulation can be extended to a pure higher-order setting with parameterization. 
%\sep
%\sep
To the best of our knowledge, it has not been discussed thoroughly whether such characterization, together with related technique such as trigges, is still applicable in a pure higher-order calculus extended with the two kinds of parameterization (on names and on processes) respectively. 
In general, it is still not clear whether the characterization result can be extended to an enriched higher-order setting. %processes capable of parameterization. 
Yet the question whether the context bisimulation has such a convenient characterization of normal bisimulation in a parameterized setting is significant in broader research, as well as in improving the bisimulation theory itself (i.e. no universal checking is demanded). 
It would be interesting to know the answer to this question also because it is already known that parameterization can strictly enhance the expressiveness power of mere process-passing \cite{LPSS10}; thus potential difference in the behavior theory of a more expressive calculus can be revealed. 
Relevantly in the study of expressiveness, a simpler yet equivalent form of context bisimulation would probably render much easier the proof of soundness (i.e. two processes are equivalent only if their encodings are equivalent), which requires the establishing of context bisimulation in a handy way. For instance, when comparing name-passing and processing \cite{San92a}\cite{Xu12}, the soundness demands that two name-passing processes are equivalent (e.g. early bisimilar) only if their translations to process-passing processes are context bisimilar. In this case, establishing normal bisimulation instead of context bisimulation would be much more convenient.
%\sep

In this paper, we explicitly examine the normal characterization of context bisimulation in pure higher-order processes with parameterization; this time the language is not mixed, that is we study each parameterization separately, so that the essential difference can be revealed. We use $\Pi$ to denote the basic (strict) higher-order pi-calculus, with the elementary operators (higher-order input and output, parallel composition, restriction) and without any first-order fragment. Notation $\Pi^D_n$ (respectively $\Pi^d_n$) stands for the calculus extending $\Pi$ with process parameterization (respectively name parameterization) of arity $n$ ($n\in \mathbb{N}$ and $\mathbb{N}$ is the set of natural numbers); $\Pi^D$ (respectively $\Pi^d$) is the union of $\Pi^D_n$ (respectively $\Pi^d_n$). For example, the process $P_1$ below is a $\Pi$ process (it means outputting $X$ on name $a$ and continuing as null); $P_2$ is a $\Pi^D_1$ process; $P_3$ is a $\Pi^d_1$ process. The parameterization operation is denoted by (leftmost) $\lrangle{\cdot}$. In the case of $P_2$, the $X$ is a parameterized variable that can be instantiated with an application (rightmost $\lrangle{\cdot}$), for instance $P_2\lrangle{A}$ results in the process $\overline{a}A.0$. We will formally define the calculi in section \ref{def-languages}.
%\vspace*{-.1cm}
\[
\begin{array}{lll}
P_1 \DEF \overline{a}X.0 \qquad&
P_2 \DEF \lrangle{X}\overline{a}X.0 \qquad&
P_3 \DEF \lrangle{x}\overline{x}X.0
\end{array}
%\vspace*{-.1cm}
\]
More often than not, process (respectively name) parameterization is understood as certain abstraction on processes (respectively names) in the literature. So parameterization and abstraction may be used interchangeably. 
%The definitions of these calculi are standard and given in section \ref{def-languages}.   

%\rc{\FROMHERE}

\subsection*{Contribution}%\vspace*{-.1cm}
The main contribution of this paper can be summarized as below. 
\begin{itemize}
\item In the calculus $\Pi^D_n$ ($\Pi^D$ as well), we have normal bisimulation and it indeed characterizes context bisimulation. We prove the coincidence by re-exploiting the available method from \cite{San92}\cite{San94}. Moreover, a technical novelty here  is that the proof is given in a more direct way, rather than first resorting to an intermediate bisimulation called triggered bisimulation by restricting to a sub-class of processes as described above. 

\item The approach of normal bisimulation (i.e. simple characterization of context bisimulation with normal bisimulation) cannot be extended to the calculus $\Pi^d_n$ ($\Pi^d$ either). That is, one cannot reuse the explicit technique of `normal' bisimulation from \cite{San92}\cite{San94}. We provide a counterexample and some detailed discussion. 
\end{itemize}
Since bisimulation theory often serves as the basis (or tool) for more advanced studies, these results provide a reference point for  related work on higher-order calculi. They also shed light on the expressiveness of parameterization, because 
the contrast between the two results above reveals an essential separation between parameterization on process and names. Furthermore, such distinction also stresses the intrinsic complexity of name-handling (in a strictly higher-order setting without name-passing). So it will be interesting to look for some other kind of simpler characterization of context bisimulation in presence of name parameterization. 
%\vspace*{-.3cm}

\subsection*{Organization}
The paper is organized as follows.  
In section \ref{def-languages}, we define the calculi $\Pi,\Pi^D_n,\Pi^d_n$, and the standard form of context bisimulation.  Section \ref{bigD-charac} proves that there indeed exists the normal bisimulation that characterizes context bisimulation in $\Pi^D_n$ ($n\in \mathbb{N}$). On the other hand, $\Pi^d_n$ does not have such characterization in general, which we show in section \ref{smalld-charac}. In section \ref{conclusion}, we provide the conclusions and some further discussions.

%---------------------------
% Local Variables:
% mode: LaTeX
% TeX-master: "main.tex"
% End:
% End:

%\clearpage

%--------------------------------------------------------------------------------------------
%\input{intro.tex}
%\clearpage

%--------------------------------------------------------------------------------------------
%\section*{Definitions}
%definition of the languages

\section{Calculi}\label{def-languages}
In this section, we define the calculus $\Pi$, and its variants $\Pi^D_n, \Pi^d_n$ within which the context bisimulation is examined. %We save those conventions of the calculi that are standard \cite{San92}. 

%\rc{\FROMHERE}

%\vspace*{-.4cm}
\subsection{Calculus $\Pi$}
The $\Pi$ processes, denoted by uppercase letters ($A,B,E,F,P,Q,R,T...$) and their variant forms (e.g.  $T'$), are defined by the following grammar ($\Pi_{seg}$ is used to provide convenience for defining the variants of $\Pi$). Lowercase letters represent  channel names, whereas $X,Y,Z$ stand for process variables.
\[
\begin{array}{l}
T ::= \Pi_{seg}  \\
\Pi_{seg} ::= 0 \;\Big{|}\; X \;\Big{|}\;  u(X).T \;\Big{|}\; \overline{u}T'.T \;\Big{|}\; T\para T' \;\Big{|}\; (c)T  %\;|\; !P  %\;|\; P{+}P'
\end{array}
\]
The operators are: prefix ($u(X).T, \overline{u}T'.T$), parallel composition ($T\para T'$), restriction ($(c)T$) in which $c$ is bound (or local); they have their standard meaning, and parallel composition has the least precedence. 
As usual some convenient notations are: $a$ for $a(X).0$; $\overline{a}$ for $\overline{a}0.0$; $\overline{m}A$ for $\overline{m}A.0$; $\tau.P$ for $(a)(a.P\para \overline{a})$; sometimes $\overline{a}[A].T$ for $\overline{a}A.T$ (for the sake of  clarity); $\ve{\cdot}$ for a finite sequence of some items (e.g. names, processes), and $\ve{c}\ve{d}$ for the concatenation of $\ve{c}$ and $\ve{d}$.  
By standard definition, $fn(\ve{T})$, $bn(\ve{T})$, $n(\ve{T})$; $fv(\ve{T})$, $bv(\ve{T})$, $v(\ve{T})$ respectively denote free names, bound names, names; free variables, bound variables and variables in $\ve{T}$. Closed processes contain no free variables, and are studied by default. 
A fresh name or variable is one that does not occur in the processes under consideration.
Name substitution $T\fosub{\ve{n}}{\ve{m}}$ and higher-order substitution $T\hosub{\ve{A}}{\ve{X}}$ are defined structurally in the standard way. 
$E[\ve{X}]$ denotes $E$ with variables $\ve{X}$, and $E[\ve{A}]$ stands for $E\hosub{\ve{A}}{\ve{X}}$. 
We work up-to $\alpha$-conversion and always assume no capture.  
We use the following version of replication as a derived operator \cite{Tho93}\cite{LPSS08}:  
$
  !\phi.P \DEF (c)(Q_c \para \overline{c}Q_c),\;
  Q_c \DEF c(X).(\phi.(X\para P) \para \overline{c}X)
$, where $\phi$ is a prefix. 

The operational semantics is given in Figure \ref{lts-Pi}. Symmetric rules are omitted. 
The rules are mostly self-explanatory. For example, in higher-order input ($a(A)$), the received process $A$ becomes part of the receiving environment through a substitution; in higher-order output ($(\ve{c})\overline{a}A$), the process $A$ is sent with a set of local names for prospective use in further communication. 
In the first rule of the second row, we slightly abuse the notation, i.e. $fn(A){-}\{\ve{c},a\}$ means the free names of $A$ minus the names in $\ve{c}$ and $a$, which excludes the possibility of $d{=}a$. 
Symbols $\alpha,\beta,\lambda,...$ denote actions, whose subject (e.g. $a$ in action $a(A)$) indicates the channel name on which it happens.  Operations $fn(), bn(), n()$ can be similarly defined on actions. 
$\wt{}$ is the reflexive transitive closure of the internal transition ($\st{\tau}$), and $\wt{\lambda}$ is $\wt{}\xrightarrow{\lambda}\wt{}$.  $\wt{\hat{\lambda}}$ is $\wt{}$ when $\lambda$ is $\tau$ and $\wt{\lambda}$ otherwise. $\st{\tau}_k$ means $k$ consecutive $\tau$'s. 
%``$P\wt{\hat{\lambda}} P'$" abbreviates ``$P\wt{\hat{\lambda}} P'$ for some $P'$".  
$P\wt{}\cdot \mathcal{R}\, Q$ means $P\wt{} Q'$ for some $Q'$ and $Q' \,\mathcal{R}\, Q$ (i.e. $(Q',Q)\in \mathcal{R}$), where $\mathcal{R}$ is a binary relation.  
We say relation $\mathcal{R}$ is closed under (variable) substitution if $(E\hosub{A}{X},F\hosub{A}{X})\in \mathcal{R}$ for any $A,X$ whenever $(E,F)\in \mathcal{R}$, in which $E,F$ (possibly) have free occurrence of variable $X$. 
A process diverges if it can perform an infinite $\tau$ sequence. 

\vspace*{-.55cm}
\begin{figure}[htbp]
\[\begin{array}{llll}
\frac{\displaystyle }{\displaystyle a(X).T\st{a(A)} T\hosub{A}{X}} \; &    %\mbox(input) &
 \frac{}{\displaystyle \overline{a}A.T\st{\overline{a}A} T} \; &       %\mbox(output) %&
\frac{\displaystyle T\st{\lambda} T'}{\displaystyle (c)T\st{\lambda} (c)T'}c\not\in n(\lambda) \;\;&
 \frac{\displaystyle T_1\st{a(A)} T_1', T_2\st{(\ve{c})\overline{a}[A]} T_2'}{\displaystyle T_1\para T_2 \st{\tau}(\ve{c})(T_1'\,|\,T_2')}
\end{array}
\]

\[\begin{array}{ll}
\frac{\displaystyle T\st{(\ve{c})\overline{a}[A]} T'}{\displaystyle (d)T\st{(d\ve{c})\overline{a}[A]} T'}d \in fn(A){-}\{\ve{c},a\} \quad & 
\frac{\displaystyle T\st{\lambda} T'}{\displaystyle T\para T_1\st{\lambda} T'\para T_1} bn(\lambda)\cap fn(T_1)=\emptyset 
\end{array}
\]

\vspace*{-.3cm}
\caption{Semantics of $\Pi$}\label{lts-Pi}
%\vspace*{-.7cm}
\end{figure}

\vspace*{-.8cm}
\subsection{Calculi $\Pi^D_n$ and $\Pi^d_n$}
Parameterization extends $\Pi$ with the syntax and semantics in Figure \ref{PiDdn}. 
$\lrangle{\vect{U}} T$ is a parameterization where $\vect{U}$ are the pairwise distinct formal parameters to be instantiated by the application $T\lrangle{\vect{K}}$ where the parameters are instantiated by concrete objects $\vect{K}$. 
Application binds tighter than prefixes and restriction. 
In the rule of Figure \ref{PiDdn}, $\size{\ve{U}}{=}\size{\ve{K}}$ requires the sequence of parameters and the sequence of instantiating objects should be equally sized. It also expresses that the parameterized process can do an action only after the application happens. 
Calculus $\Pi^D_n$, which has process parameterization (or higher-order abstraction),  is defined by setting $\ve{U},\ve{K}$ to be  $\ve{X},\ve{T'}$ respectively. 
Calculus $\Pi^d_n$, which has name parameterization (or first-order abstraction),  is defined by setting $\ve{U},\ve{K}$ to be $\ve{x},\ve{u}$ respectively. 
For convenience, names (ranged over by $u,v,w$) are divided into two disjoint subsets: name constants (ranged over by $a,b,c,...,m,n$); name variables (ranged over by $x,y,z$).
%\vspace*{-.6cm}
\begin{figure}[htbp]

\[\begin{array}{l}
T::= \Pi_{seg}  \;\Big{|}\; \lrangle{\vect{U}} T \;\Big{|}\;  T\lrangle{\vect{K}}  \\ %\;|\; \overline{u}A.P \\%\;|\; P{+}P'
\frac{\displaystyle T\hosub{\ve{K}}{\ve{U}} \stackrel{\lambda}{\longrightarrow } T'}{\displaystyle F\lrangle{\ve{K}}\stackrel{\lambda}{\longrightarrow } T'}\;\;\mbox{ if } F\DEF \lrangle{\ve{U}}T \;\;(\size{\ve{U}}=\size{\ve{K}}=n)
\end{array}
\]

%\vspace*{-.3cm}
\caption{$\Pi$ with parameterization}\label{PiDdn}
\end{figure}
%\vspace*{-.45cm}
Process expressions (or terms) of the form $\lrangle{\ve{X}} P$ or $\lrangle{\ve{x}} P$, in which $\ve{X}$ and $\ve{x}$ are not empty, are \emph{parameterized processes}. Terms without outmost parameterization are \emph{non-parameterized processes}, or simply processes. We mainly focus on (closed) processes. Only free  variables can be effectively parameterized (i.e. it does not make sense to parameterize a bound variable); they become \emph{bound} after parameterization. In the syntax, redundant parameterizations, for example $\lrangle{X_1,X_2}P$ in which $X_2\notin fv(P)$, are allowed. A $\Pi^D_n$ ($n\geq 1$) process is therefore definable in $\Pi^D_{n+1}$ (similar for $\Pi^d_n$). In the case of $\lrangle{\ve{X}}P$, it can be coded up by setting an additional fresh dummy variable $Y$ (of no use) to obtain $\lrangle{\ve{X},Y}P$. Sometimes related notations are slightly abused if no confusion is caused.

Type systems for the processes of $\Pi^D_n$ and $\Pi^d_n$ can be routinely defined in a similar way to that in \cite{San92}. %, where related concepts such as order are also formalized. 
We do not present the type system and always assume type consistency, since such a type system is not important for the discussion in this paper. 
Without loss of generality, we stipulate that the processes of $\Pi^D_n$ and $\Pi^d_n$ are strictly abstraction-passing, i.e. all the transmitted objects are parameterized processes; accordingly, it is assumed that all the process variables have the type of abstractions, so an occurrence of a variable $X$ typically takes the form $X\lrangle{T'}$ in some context. This can be justified by two facts: firstly a non-parameterized process can be treated as a special case of parameterization; secondly to the aim of this paper, the characterization of context bisimulation in the case of non-parameterized processes has been examined in depth by Sangiorgi \cite{San94}. 
%\FROMHERE

\vspace*{-.4cm}
\begin{figure}[htbp]
\centering
\[
\begin{array}{l}
T\equiv T', \mbox{ if they are $\alpha$-convertible to each other},\\
\quad\quad\quad\;\;\; i.e.,\; a(X).T\equiv a(Z).T\hosub{Z}{X}, (c)T\equiv (d)T\fosub{d}{c} \\[.5ex]
T\para 0 \equiv T, \, T\para (T'\para T'') \equiv (T\para T')\para T'', \, T \para T'  \equiv T'\para T \\[.5ex]
(c)(d)T \equiv (d)(c)T, \,(c)\lambda.T\equiv 0 \mbox{ whenever the subject of $\lambda$ is $c$} \\[.5ex]
(c)(T\para T') \equiv (c)T\para T' \mbox{ whenever } c\notin fn(T') \\[.5ex]
(\lrangle{\ve{U}}T)\lrangle{\ve{K}}\equiv T\hosub{\ve{K}}{\ve{U}}, \size{\ve{U}}=\size{\ve{K}}=n
%(\lrangle{\ve{X}}P)\lrangle{\ve{A}}\equiv P\hosub{\ve{A}}{\ve{X}}, (\lrangle{\ve{x}}P)\lrangle{\ve{u}}\equiv P\fosub{\ve{u}}{\ve{x}}
\end{array}
\]
\vspace*{-.3cm}
\caption{Structural congruence}\label{struc-congru}
\vspace*{-.4cm}
\end{figure}
The semantics of parameterization renders it somewhat natural to deem application as some (extra) rule of structural congruence, 
denoted by $\equiv$, which is defined in Figure \ref{struc-congru} in a standard way \cite{Mil92}\cite{San94}. In addition to the standard algebraic laws (concerning parallel composition and restriction), the last rule of application is included. 
So the rule below can used in place of that in Figure \ref{PiDdn}.
\[\infer{T\st{\alpha} T' }{T\equiv T_1, T_1\st{\alpha} T_2, T_2\equiv T'}
\]

%\vspace*{-.5cm}
\subsection{Context Bisimulation}
The bisimulation equivalence of higher-order processes we intend to examine is the context bisimulation. The form of its definition is the same for the calculi defined above.
%\vspace*{-.1cm}
\begin{definition}[Context bisimulation]\label{context-bisimulation}
A symmetric binary relation $\mathcal{R}$ on closed processes is a context bisimulation, if
%it is closed under substitution of names, and
whenever $P\,\mathcal{R}\, Q$ the following properties hold: 
\begin{enumerate}
\item If $P \st{\alpha} P'$ and $\alpha$ is $\tau$ or $a(A)$, then $Q \wt{\hat{\alpha}} Q'$ for some $Q'$ and $P'\,\mathcal{R}\, Q'$;

\item If $P \st{(\ve{c})\overline{a}A} P'$ then $Q \wt{(\ve{d})\overline{a}B} Q'$ for some $\ve{d},B,Q'$, and for every process $E[X]$
 s.t. $\{\ve{c},\ve{d}\}\cap fn(E)=\emptyset$ it holds that
$(\ve{c})(E[A]\para P') \; \mathcal{R}\;  (\ve{d})(E[B]\para Q')$.
\end{enumerate}
Process $P$ is context bisimilar to $Q$, written $P\,\approx, Q$, if $P\,\mathcal{R}\, Q$ for some context bisimulation $\mathcal{R}$. Relation $\approx$ is called context bisimilarity. 
%Context bisimilarity, written $\HOCB$, is the largest context bisimulation.
\end{definition}

Context bisimulation can be extended to general (open) processes in a standard way, i.e. suppose $fv(T,T')\subseteq \ve{X}$, then $T\approx T'$ if $T\hosub{\ve{A}}{\ve{X}} \approx T'\hosub{\ve{A}}{\ve{X}}$ for all closed $\ve{A}$.  
{Similarly the extension to parameterized processes is: $\lrangle{\ve{X}}T \approx \lrangle{\ve{X}}T'$ if $T\hosub{\ve{A}}{\ve{X}} \approx T'\hosub{\ve{A}}{\ve{X}}$ all closed $\ve{A}$. } %Is this ok? seems so, see \cite{San92}.}

Relation $\sim$ is the strong version of $\approx$. 
For clarity, notation $\approx_{\mathcal{L}}$ (resp. $\sim _{\mathcal{L}}$) indicates the context bisimilarity (resp. strong context bisimilarity) of calculus $\mathcal{L}$ ($\Pi,\Pi^D_n$, or $\Pi^d_n$); we simply use $\approx$ (resp. $\sim$) when it is clear from context.
It is well-known that context bisimilarity is \emph{an equivalence and a congruence} with respect to prefixing, parallel composition and restriction; see \cite{San92}\cite{San92a}\cite{San94}\cite{San96}. 
%We will utilize these properties (e.g. in technical proofs of the encodings). 
{
Notice $E[X]$ is different from the well-known concept of contexts, which neglect name capture (i.e. the case a free name falls into the scope of some restriction of the same name). That is $E[X]$ is sensitive to name capture and should use $\alpha$-conversion to avoid that. Otherwise, such two $\alpha$-convertible processes like $(m)\overline{a}[m.0].\overline{m}.b$ and $(n)\overline{a}[n.0].\overline{n}.b$ would be distinguishable by context bisimilarity using $E[X]\equiv (m)X$ as the receiving environment (the latter can produce a visible action on $b$ while the former does not), which is contradictory because $\alpha$-convertibility shall entail context bisimilarity. 
}

%---------------------------
% Local Variables:
% mode: LaTeX
% TeX-master: "main.tex"
% End:
% End:

%\clearpage

%--------------------------------------------------------------------------------------------
%main part 1: characterization the context bisimulation in $\Pi^D_n$

%\vspace*{-.2cm}
\section{Normal bisimulation in $\Pi^D_n$}\label{bigD-charac}
In this section, we show that in $\Pi^D_n$  we have normal bisimulation that characterizes context bisimulation. 
For convenience, we focus on $\Pi^D_1$; the result can be readily extended to $\Pi^D_n$. % (as well as $\Pi^D$). 
We first define normal bisimulation, and then prove the coincidence theorem.
Hereinafter $Tr_m\DEF \lrangle{Z}\overline{m}Z$ denotes a trigger with (trigger name) $m$.  

The form of normal bisimulation  and the proof schema stem from the result in \cite{San92}\cite{San94}\cite{Cao06}. However, as mentioned, we go beyond those works in several respects. Firstly, the processes under inspection here are purely higher-order without name-passing, and capable of parameterization on processes only. Secondly, although the schema for the proof of the characterization of context bisimulation using normal bisimulation exploits those works, the technical details are not the same. More specifically, different from the approaches in \cite{Cao06}, where the so-called index technique is essentially used to deal with actions,
and in \cite{San92}\cite{San94}, where an intermediate equivalence called triggered bisimulation is used and processes are first transformed into a subclass called triggered processes, we prove the coincidence of normal bisimulation and context bisimulation in a more \emph{direct} and easier way. 
 
%\vspace*{-.3cm}
\subsection{Definition of normal bisimulation}
\begin{definition}\label{normal-bisi-bigd} %[Normal bisimulation]
A symmetric binary relation $\mathcal{R}$ on closed processes of $\Pi^D_1$ is a normal bisimulation, if
%it is closed under substitution of names, and
whenever $P\,\mathcal{R}\, Q$ the following properties hold: 
\begin{enumerate}
\item If $P \st{\tau} P'$, then $Q \wt{} Q'$ for some $Q'$ s.t. $P'\,\mathcal{R}\, Q'$;

\item If $P \st{a(Tr_m)} P'$ and $Tr_m\equiv \lrangle{Z}\overline{m}Z$ ($m$ is fresh), then $Q \wt{a(Tr_m)} Q'$ for some $Q'$ s.t.  $P'\,\mathcal{R}\, Q'$;

\item If $P \st{(\ve{c})\overline{a}A} P'$ then $Q \wt{(\ve{d})\overline{a}B} Q'$ for some $\ve{d},B,Q'$, 
and it holds that ($m$ is fresh)
\[(\ve{c})(P'\para !m(Z).A\lrangle{Z}) \; \mathcal{R}\;  (\ve{d})(Q'\para  !m(Z).B\lrangle{Z})
\] 
which can be rephrased as $(\ve{c})(P'\para E'[A]) \; \mathcal{R}\;  (\ve{d})(Q'\para  E'[B])$
where the $E'[X] \DEF !m(Z).X\lrangle{Z}$ holding $A$ and $B$ is special, in contrast to the general requirement in context bisimulation. 
%and for every process $E[X]$  s.t. $\{\ve{c},\ve{d}\}\cap fn(E)=\emptyset$ it holds that $(\ve{c})(E[A]\para P') \; \mathcal{R}\;  (\ve{d})(E[B]\para Q')$.
\end{enumerate}
Process $P$ is normal bisimilar to $Q$, written $P\,\cong, Q$, if $P\,\mathcal{R}\, Q$ for some normal bisimulation $\mathcal{R}$. Relation $\cong$ is called normal bisimilarity. 
%Normal bisimilarity, written $\cong$, is the largest context bisimulation.
\end{definition}

%...

\noindent\textbf{\emph{Remark}} %$\cong$ denotes the normal bisimilarity; 
The strong version of $\cong$ is denoted by $\simeq$. 
Notation $\cong_{\mathcal{L}}$ (resp. $\simeq _{\mathcal{L}}$) indicates the normal bisimilarity (resp. strong normal bisimilarity) of calculus $\mathcal{L}$, if any; we simply use $\cong$ (resp. $\simeq$) when there is no confusion.
It can be shown in a standard way that normal bisimilarity is \emph{an equivalence and a congruence};  %w.r.t. prefixes, parallel composition and restriction
see \cite{San92}\cite{San94} for a reference. 
%\vspace*{.3cm}

The rest of this section is mainly devoted to the (technical) proof of the following theorem. 
\begin{theorem}\label{normal-characterization-bigd} %[Normal characterization]
In $\Pi^D_1$, normal bisimilarity coincides with context bisimilarity; that is $\cong \,=\, \approx$.  
\end{theorem}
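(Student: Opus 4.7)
The proof splits into two inclusions. The direction $\approx\,\subseteq\,\cong$ I would dispatch first by verifying that $\approx$ itself meets the three clauses of Definition \ref{normal-bisi-bigd}. The $\tau$-clause is immediate, the trigger-input clause is the particular case $A\equiv Tr_m$ of the universally quantified input clause of Definition \ref{context-bisimulation}, and the normal output clause is obtained by specialising the universally quantified context of the context-bisimulation output clause to $E[X]\equiv\,!m(Z).X\lrangle{Z}$ for fresh $m$. All the substantive work therefore lies in $\cong\,\subseteq\,\approx$, and I would deliberately avoid routing through an intermediate triggered-bisimilarity on a trigger-only sublanguage as in \cite{San92,San94}, giving the argument a more direct shape.

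For the substantive direction the key lemma to prove first is the factorization theorem
\[
E[A] \,\approx\, (m)\bigl(E[Tr_m]\,\para\,!m(Z).A\lrangle{Z}\bigr), \qquad m \text{ fresh,}
\]
which I would establish by exhibiting an explicit candidate relation and checking the context-bisimulation clauses directly, using that each invocation $Tr_m\lrangle{K}$ communicates with the server $!m(Z).A\lrangle{Z}$ to release a fresh copy of $A\lrangle{K}$; so every observable move on the right-hand side either already occurs in $E[A]$ or is exactly this trigger/server rendezvous that instantiates the corresponding occurrence of $A$ in $E[A]$. Residuals then match after the usual restriction-extrusion and congruence closure, with standard care for freshness of $m$ and $\alpha$-conversion on bound names of $A$.

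Armed with factorization, I would then show directly that $\cong$ is a context bisimulation via a candidate $\mathcal{R}$ containing $\cong$ and closed under the process-context constructions that arise in the two non-trivial clauses. For the input clause, given $P\,\cong\,Q$ with $P\st{a(A)}P'$, write $P'\equiv P_0[A]$ via the structural form of the input rule; clause~2 of Definition \ref{normal-bisi-bigd} applied to the trigger instantiation $P\st{a(Tr_m)}P_0[Tr_m]$ yields $Q\wt{a(Tr_m)}Q_0[Tr_m]$ with $P_0[Tr_m]\,\cong\,Q_0[Tr_m]$, and factorization transports this back to $P_0[A]\,\approx\,(m)(P_0[Tr_m]\,\para\,!m(Z).A\lrangle{Z})$ and, symmetrically, $(m)(Q_0[Tr_m]\,\para\,!m(Z).A\lrangle{Z})\,\approx\,Q_0[A]$; the two middle processes are placed in $\mathcal{R}$ by congruence of $\cong$. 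For the output clause, given $P\st{(\ve{c})\overline{a}A}P'$ matched by $Q\wt{(\ve{d})\overline{a}B}Q'$ with $(\ve{c})(P'\,\para\,!m(Z).A\lrangle{Z})\,\cong\,(\ve{d})(Q'\,\para\,!m(Z).B\lrangle{Z})$, factorization applied with the enclosing context $(\ve{c})(E[\cdot]\,\para\, P')$ rewrites $(\ve{c})(E[A]\,\para\, P')$ as $(m)\bigl(E[Tr_m]\,\para\,(\ve{c})(P'\,\para\,!m(Z).A\lrangle{Z})\bigr)$ by moving $E[Tr_m]$ outside the $(\ve{c})$-scope (admissible since $\ve{c}\cap fn(E)=\emptyset$ and $Tr_m$ mentions only the fresh $m$), with a symmetric rewrite on the $Q$-side; congruence of $\cong$ then connects the two rewrites inside $\mathcal{R}$.

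The main obstacle I expect is the careful set-up of $\mathcal{R}$ so that factorization plus congruence close back under the bisimulation game without a circular appeal to the very inclusion $\cong\,\subseteq\,\approx$ that we are proving; this demands that factorization be proven beforehand as an independent fact about $\approx$, and then threaded through $\mathcal{R}$ in an up-to-$\approx$ style. The secondary pitfall is scope management in the output clause: the extruded names $\ve{c}$ occur free in $A$, so the restriction $(\ve{c})$ cannot be scope-extruded across the server $!m(Z).A\lrangle{Z}$, only across $E[Tr_m]$, which means the structural-congruence rewrites have to be performed in the right order and with attention to the freshness of $m$ to preclude spurious name capture.
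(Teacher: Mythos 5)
Your proposal is correct and follows essentially the same route as the paper's own proof: establish the factorization theorem $E[A]\approx (m)(E[Tr_m]\para\,!m(Z).A\lrangle{Z})$ first, then show that $\cong$ itself is a context bisimulation up-to $\approx$ directly (bypassing triggered bisimilarity), handling the input clause via the trigger instantiation $P\st{a(Tr_m)}E[Tr_m]$ and the output clause by composing with $E[Tr_m]$, restricting on $m$, and rearranging scopes exactly as you describe. Your remarks on the non-circularity of the factorization lemma and on the scope-extrusion constraint for $(\ve{c})$ match the considerations implicit in the paper's argument.
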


%\vspace*{-.5cm}
\subsection{Normal bisimulation characterizes context bisimulation}
A key step in proving Theorem \ref{normal-characterization-bigd} is to prove the Factorization theorem (Theorem \ref{factor-bigd}). First we need some preparation, i.e. the two lemmas below. Intuitively, these two lemmas state some distributive laws concerning the replication $ !m(Z).A\lrangle{Z}$ ($m$ is fresh) that are useful in the proof of the Factorization theorem.

\begin{lemma}\label{premise-factor-0}
Suppose $E[X], E_1[X], E_2[X]$ belong to $\Pi^D_1$, and 
let $m\notin fn(E,E_1,E_2,A)$. %The following properties hold.
\begin{enumerate}
\item[(1)] If $m\notin fn(\alpha)$, then 
\[
\begin{array}{l}
 (m)(\alpha.E[Tr_m] \para  !m(Z).A\lrangle{Z}) \approx \alpha.(m)(E[Tr_m] \para  !m(Z).A\lrangle{Z}) 
\end{array}
\]

\item[(2)] It holds for output prefix that
\[
\begin{array}{l}
 (m)(\overline{a}B_1.E_1[Tr_m] \para  !m(Z).A\lrangle{Z}) 
\approx  (m)(\overline{a}B_2.E_1[Tr_m] \para  !m(Z).A\lrangle{Z})
\end{array}
\] where $B_1\equiv E_2[Tr_m]$, $B_2\equiv (m)(E_2[Tr_m] \para  !m(Z).A\lrangle{Z})$.

\item[(3)] It holds for parallel composition that
\[
\begin{array}{ll}
& (m)(E_1[Tr_m]\para E_2[Tr_m] \para  !m(Z).A\lrangle{Z}) \\
\approx &  (m)(E_1[Tr_m] \para  !m(Z).A\lrangle{Z}) \para (m)(E_2[Tr_m] \para  !m(Z).A\lrangle{Z})
\end{array}
\]
\end{enumerate}
\end{lemma}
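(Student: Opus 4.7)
The plan is to treat each item as a separate bisimulation argument, handling item~(1) directly, proving item~(3) as the conceptual core, and then deriving item~(2) from (3) together with the congruence of $\approx$. Throughout, the freshness hypothesis $m\notin fn(E,E_1,E_2,A)$ combined with the restriction $(m)$ means that $!m(Z).A\lrangle{Z}$ can behave only as a private factory: it fires exactly when one of its bound partners (via $Tr_m$) emits on~$m$, producing a copy of $A\lrangle{K}$. For item~(1), the argument is essentially immediate. Because $m\notin fn(\alpha)$ and nothing on the left-hand side can emit on~$m$ before the top-level $\alpha$ has fired, the only possible initial transition on either side is that $\alpha$; after it fires both sides are structurally congruent to $(m)(E[Tr_m]\para\,!m(Z).A\lrangle{Z})$. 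The candidate relation $\mathcal{R}_1 \DEF \{(\mathrm{LHS},\mathrm{RHS})\}\cup\{(Q,Q) : Q\in\Pi^D_1\}$ is then routinely a weak context bisimulation: the matching derivatives coincide, which trivialises the universal quantification over receiving environments in the output clause.

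For item~(3), I will construct a candidate $\mathcal{R}_3$ containing all pairs of the form
\[
\bigl(\,(m)(C \para\, !m(Z).A\lrangle{Z}),\ (m)(C_1 \para\, !m(Z).A\lrangle{Z}) \para (m)(C_2 \para\, !m(Z).A\lrangle{Z})\,\bigr),
\]
where $C \equiv C_1 \para C_2$ evolves from $E_1[Tr_m]\para E_2[Tr_m]$, possibly after spawning further copies of $A\lrangle{K}$ as earlier triggers fire. The transfer property is verified by cases on the available transitions: actions whose subject is not~$m$ proceed symmetrically in the two clients $C_1,C_2$; actions on~$m$ are necessarily local communications that are consumed by the appropriate replication in each component, and they introduce the same new residual $A\lrangle{K}$ on both sides. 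The output clause requires extra care because it quantifies universally over receiving contexts and may further extrude names, but freshness of~$m$ ensures the external environment can only observe $A$-factories under the restriction, and cannot tell apart one shared factory from two separate ones.

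For item~(2), the plan is to use item~(3) directly. The left-hand side's prefix fires with the output action $(m)\overline{a}(E_2[Tr_m])$ (extruding~$m$, since $m\in fn(Tr_m)$) with residual $E_1[Tr_m]\para\,!m(Z).A\lrangle{Z}$; the right-hand side fires the output $\overline{a}B_2$ (no extrusion, since $m$ is already bound inside $B_2$) with residual $(m)(E_1[Tr_m]\para\,!m(Z).A\lrangle{Z})$. Instantiating the context-bisimulation output clause at an arbitrary receiving $E[X]$ with $m\notin fn(E)$ reduces the desired equivalence to
\[
(m)\bigl(E[E_2[Tr_m]] \para E_1[Tr_m] \para\, !m(Z).A\lrangle{Z}\bigr) \;\approx\; E[B_2] \para (m)(E_1[Tr_m]\para\,!m(Z).A\lrangle{Z}).
\]
Applying item~(3) to split the $m$-scope on the left yields $(m)(E[E_2[Tr_m]]\para\,!m(Z).A\lrangle{Z}) \para (m)(E_1[Tr_m]\para\,!m(Z).A\lrangle{Z})$, and the first factor matches $E[B_2]$ either by scope extension (when $X$ occurs linearly in~$E$) or, when $X$ is duplicated or deeply nested inside~$E$, by a secondary appeal to~(3) replicated over the occurrences of $X$; congruence of $\approx$ with respect to parallel composition then closes the gap.

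The main obstacle will be item~(3): exhibiting an adequate closure of $\mathcal{R}_3$ under all transitions, and especially under the output clause, where name extrusion and the universal quantification over external contexts interact with the splitting of the replication into two copies. I expect to use standard up-to techniques (up-to structural congruence and up-to restriction) to keep the relation tractable, and to rely on freshness of~$m$ throughout to ensure that no cross-talk between the two private factories on the right-hand side can be produced by the external environment.
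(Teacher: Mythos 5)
The paper itself gives no details for this lemma (it appeals to ``a standard argument on establishing bisimulations, where the scope of restriction concerning $m$ is the critical part''), and your plan for items (1) and (3) is in that same spirit: exhibit candidate relations and use the freshness of $m$ to control how the private factory $!m(Z).A\lrangle{Z}$ can be used. Two remarks there. First, your singleton relation $\mathcal{R}_1$ for item (1) is not literally a bisimulation: with the paper's derived replication $!\phi.P \DEF (c)(Q_c \para \overline{c}Q_c)$, the component $!m(Z).A\lrangle{Z}$ can perform an administrative $\tau$ (unfolding) before $\alpha$ fires, so the left-hand side has $\tau$-derivatives that are not structurally congruent to the right-hand side; you need to close $\mathcal{R}_1$ under these derivatives or work up to $\approx$. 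Second, and more importantly, your derivation of item (2) from item (3) has a genuine gap. After the output step you correctly reduce the obligation to
\[
(m)\bigl(E[E_2[Tr_m]] \para\, !m(Z).A\lrangle{Z}\bigr) \;\approx\; E\bigl[(m)(E_2[Tr_m]\para\, !m(Z).A\lrangle{Z})\bigr]
\]
for an arbitrary receiving context $E[X]$ with $m\notin fn(E)$. This cannot be discharged by ``scope extension'': $m$ occurs free in $E[E_2[Tr_m]]$ (that is the whole point of the trigger), so the restriction $(m)$ cannot be pushed inside $E$ past the occurrences of $X$. Nor does ``a secondary appeal to (3)'' suffice: item (3) only splits a factory across a top-level parallel composition, whereas $X$ may occur under prefixes in $E$. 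In particular, for $E[X]\equiv \overline{b}X.0$ the displayed statement \emph{is} an instance of item (2) itself, so your argument is circular. What is actually needed is a separate induction on the structure of $E$ (using (1) and (3) in the parallel and non-output-prefix cases, and the very clause you are trying to prove in the output-prefix case), or, more cleanly, a direct coinductive proof of (2) in which the candidate relation contains all pairs of the form
\[
\Bigl(\,(m)\bigl(E[B_1]\para C\para\, !m(Z).A\lrangle{Z}\bigr),\ E[B_2]\para (m)\bigl(C\para\, !m(Z).A\lrangle{Z}\bigr)\,\Bigr)
\]
parametrized over contexts $E$ and residuals $C$, closed up to $\approx$ and structural congruence. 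Either way, (2) is not a corollary of (3); it requires its own quantification over contexts, which is exactly where the delicacy about the scope of $m$ that the paper alludes to resides.

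A smaller point on item (3): your relation $\mathcal{R}_3$ must also record that once a trigger in $C_1$ (say) has fired, the spawned copy $A\lrangle{K}$ may itself be exported in a later output and placed by the observer into an arbitrary context; since $m\notin fn(A)$ these copies are factory-independent, so this is harmless, but your case analysis should say so explicitly rather than only tracking communications on $m$. With these repairs the overall strategy is sound and matches the intended (omitted) argument of the paper.
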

\begin{proof}
%The proof is basically some standard argument on establishing bisimulations, where the scope of restriction concerning $m$ is the critical part. The details are thus omitted. See \cite{San92}(Theorem 4.3.3) for a reference. 
The proof is based on a standard argument on establishing bisimulations, where the scope of restriction concerning $m$ is the critical part. The details are thus omitted. 
%\qed
\end{proof}

\begin{lemma}\label{premise-factor-1}
Suppose $E_1[X]$ belongs to $\Pi^D_1$, and let $m$ be fresh. It holds for every $A,B$ (of the type of abstraction) that 
\[B\lrangle{(m)(E_1[Tr_m] \para  !m(Z).A\lrangle{Z})} \approx (m)(B\lrangle{E_1[Tr_m]} \para  !m(Z).A\lrangle{Z}) 
\]
\end{lemma}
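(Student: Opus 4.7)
The approach is to unfold the abstraction $B$ and then proceed by structural induction on its body.  Writing $B\equiv \lrangle{Y}B^{*}$ (using $\alpha$-conversion to avoid capture), the application rule of structural congruence (Figure~\ref{struc-congru}) rewrites both sides as $B^{*}\hosub{N}{Y}$ and $(m)(B^{*}\hosub{E_1[Tr_m]}{Y}\para !m(Z).A\lrangle{Z})$, where $N\DEF (m)(E_1[Tr_m]\para !m(Z).A\lrangle{Z})$.  The task is then to show, by induction on $B^{*}$, that the ``packaged'' server $(m)(\cdot\para !m(Z).A\lrangle{Z})$ can be distributed uniformly across the syntactic constructors of $B^{*}$, so that the copy of $N$ substituted at each free occurrence of $Y$ is replaced by an occurrence of $E_1[Tr_m]$ backed by a single shared server under one outer restriction on $m$.

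The inductive cases follow the structure of $B^{*}$.  When $Y\notin fv(B^{*})$ the two substitutions give the same term and the right-hand side reduces to $(m)(B^{*}\para !m(Z).A\lrangle{Z})\approx B^{*}$, since $m$ is fresh in $B^{*}$ and the restricted replicated input on $m$ is inert (no output on $m$ is syntactically available).  For a prefix $B^{*}\equiv \alpha.B'$, the induction hypothesis on $B'$ together with Lemma~\ref{premise-factor-0}(1) pulls the restriction on $m$ outside the prefix; for an output prefix whose object itself mentions $Y$ one additionally uses Lemma~\ref{premise-factor-0}(2) to rewrite the sent value in the same packaged form.  For a restriction $B^{*}\equiv(c)B'$, choose $c\ne m$ by $\alpha$-conversion, apply the induction hypothesis, and close by the congruence of $\approx$.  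For an application $B^{*}\equiv X\lrangle{T}$ with $X\not\equiv Y$, the argument $T$ is treated by a further recursion on its structure; the case $X\equiv Y$ is reduced, via the last clause of $\equiv$ in Figure~\ref{struc-congru}, to the body of the substituted abstraction and so returns to the shape already handled by the other cases.

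The main obstacle is the parallel-composition case $B^{*}\equiv B_1\para B_2$.  Applying the induction hypothesis to each factor independently produces the term $(m)(B_1\hosub{E_1[Tr_m]}{Y}\para !m(Z).A\lrangle{Z})\para (m)(B_2\hosub{E_1[Tr_m]}{Y}\para !m(Z).A\lrangle{Z})$, in which two separately-restricted copies of the server appear.  These two copies must be fused into a single shared server under one outer restriction on $m$, which is exactly the content of Lemma~\ref{premise-factor-0}(3) (read from right to left).  The step relies crucially on the freshness of $m$: no component outside the restriction can emit on $m$, so every trigger $Tr_m$ occurring on either side reaches the same replicated input regardless of which syntactic copy is chosen, which is also what ensures that no extra interference is introduced by merging.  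Once the two copies are combined, congruence of $\approx$ reassembles the pieces and yields the required equivalence.
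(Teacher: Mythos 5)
Your proof is correct and follows essentially the same route as the paper: unfold $B\equiv\lrangle{Y}T$, reduce via the application rule of structural congruence to an equivalence between the two substituted bodies, and proceed by structural induction on that body --- the paper merely asserts this induction is ``routine'' and omits it. Your version is in fact more explicit than the paper's, correctly identifying Lemma~\ref{premise-factor-0}(1)--(3) as the tools needed for the prefix and parallel-composition cases, exactly as they are deployed in the appendix proof of Theorem~\ref{factor-bigd}.
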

\begin{proof}
Suppose $B\equiv \lrangle{Y}T$, it amounts to prove 
\[
\begin{array}{ll}
 & T\hosub{T_1}{Y} \approx (m)(T\hosub{T_2}{Y} \para  !m(Z).A\lrangle{Z}) \\
\mbox{where } & T_1\equiv  (m)(E_1[Tr_m] \para  !m(Z).A\lrangle{Z}) \mbox{ and } T_2 \equiv E_1[Tr_m]
\end{array}
\]
This can be done by a simple induction on the structure of $T$ in a routine way. The details pertaining to establishing bisimulation during the induction would not raise major obstacle. %are similar to the previous lemma. % and omitted.
%\qed
\end{proof}

Now we are ready to show the  Factorization theorem. As mentioned, this theorem offers some method to relocate a subprocess, which might cause difference in behavior, and set up a reference to it with the help of a trigger, while maintaining the equivalence with respect to context bisimilarity. The proof of Theorem \ref{factor-bigd} is put in appendix \ref{appendix-proof-bigd-factor}.  
%\rc{\FROMHERE  see the related part in San's thesis if needed}...
%(\rc{notice related part in the thesis \cite{San92} if needed})
\begin{theorem}[Factorization]\label{factor-bigd} %[Factorization]
Given a $E[X]$ of $\Pi^D_1$, let $m$ be fresh and notice $Tr_m\equiv \lrangle{Z}\overline{m}Z$. It holds for every $A$ (of the type of abstraction) that 
% \[ 
% \begin{array}{lll}
% (1) & E[A] \approx (m)(E[Tr_m] \para  !m(Z).A\lrangle{Z}) & \mbox{ if $E[Tr_m]$ is not an abstraction}  \\
% (2) & E[A] \approx \lrangle{Y}((m)(E' \para  !m(Z).A\lrangle{Z})) & \mbox{ if $E[Tr_m]$ is $\lrangle{Y}E'$} \mbox{ for some $E'$}
% %(2) & E[A] \approx \lrangle{Y}((m)(E' \para  !m(Z).A\lrangle{Z})) & \mbox{ if $E[Tr_m]\equiv \lrangle{Y_1}\cdots\lrangle{Y_k}E'$} \mbox{ for some $k\geq 1$ and $E'$ that is not an abstraction}
% \end{array}
% \]
% Notice due to type consistency, $A$ should be an abstraction.
 
\begin{itemize}
\item[(1)] if $E[Tr_m]$ is non-parameterized, i.e. not an abstraction, then 
 \[E[A] \approx (m)(E[Tr_m] \para  !m(Z).A\lrangle{Z}) 
\]

\item[(2)]  else if $E[Tr_m]\equiv \lrangle{Y_1}\cdots\lrangle{Y_k}E'$ for some $k\geq 1$ and non-parameterized $E'$, %(that is not an abstraction)
then 
\[
E[A] \approx \lrangle{Y_1}\cdots\lrangle{Y_k}((m)(E' \para  !m(Z).A\lrangle{Z})) 
\]
\end{itemize}

\end{theorem}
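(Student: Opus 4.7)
The plan is to proceed by structural induction on $E[X]$, proving part~(1) as the inductive statement and then deriving part~(2) by peeling off the outer parameterizations at the top level and applying congruence of $\approx$ under $\lrangle{\cdot}$. Throughout, the role of the subprocess $!m(Z).A\lrangle{Z}$ is that of a persistent ``server'' that, whenever fed a value along the trigger channel $m$, supplies it to $A$; Lemmas~\ref{premise-factor-0} and~\ref{premise-factor-1} are exactly the distributive laws that let us move this server across the various process constructors while preserving $\approx$.

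For the inductive step I would case-split on the outermost operator of $E[X]$. In the trivial base cases ($E[X]\equiv 0$, or $E[X]$ contains no free occurrence of $X$), both sides are $\approx$-equivalent to $E[X]$ itself, since $m$ is fresh and garbage-collection of the unreachable server follows from standard laws for $\sim$. For an input prefix $E[X]\equiv u(Y).E'[X]$ (with $Y$ and $m$ suitably fresh), I apply the induction hypothesis under the prefix and then invoke Lemma~\ref{premise-factor-0}(1) to pull the restriction on $m$ and the server outside; the output prefix case $E[X]\equiv \overline{u}E_1[X].E_2[X]$ uses the inductive hypothesis on both $E_1$ and $E_2$ followed by Lemma~\ref{premise-factor-0}(2) to rewrite the output object, and Lemma~\ref{premise-factor-0}(1) for the continuation. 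Parallel composition $E[X]\equiv E_1[X]\para E_2[X]$ is handled by the induction hypothesis on each component and then by Lemma~\ref{premise-factor-0}(3), which distributes the server over a parallel in the presence of the two triggers. Restriction $(c)E'[X]$ follows by scope extrusion (up to $\alpha$-conversion to keep $c\neq m$). Congruence of $\approx$ under each constructor is the glue that carries these rewrites through.

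The interesting case, and the one I expect to be the main obstacle, is the \emph{application} case $E[X]\equiv B\lrangle{E'[X]}$ where $B$ itself may (a) be $X$ and (b) contain further free occurrences of $X$. By the strictly abstraction-passing discipline every variable appears only in application position, so the atomic base $E[X]\equiv X\lrangle{T[X]}$ has to be resolved explicitly: after applying the induction hypothesis inside $T[X]$ one is left with $A\lrangle{(m)(T[Tr_m]\para !m(Z).A\lrangle{Z})}$ on the left, which must be shown $\approx$-equivalent to $(m)(Tr_m\lrangle{T[Tr_m]}\para !m(Z).A\lrangle{Z})$. This is exactly what Lemma~\ref{premise-factor-1} delivers (with $B\mathrel{:=}A$), together with a short bisimulation argument that $(m)(Tr_m\lrangle{T'}\para !m(Z).A\lrangle{Z})\approx A\lrangle{T'}$ because the trigger emits on $m$ and the server responds with a single $\tau$-step, after which the unused server remains inert under $(m)$. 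For a general application $B\lrangle{E'[X]}$ with $B$ not a variable, a further induction on the structure of $B$ (reducing it to a parameterization $\lrangle{Y}T_0$ and then using the structural-congruence rule for application) handles the remaining sub-cases.

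For part~(2), when $E[Tr_m]\equiv \lrangle{Y_1}\cdots\lrangle{Y_k}E'$ with $E'$ non-parameterized, I observe that $E[A]\equiv \lrangle{Y_1}\cdots\lrangle{Y_k}E''$ where $E''$ is obtained from $E'$ by replacing the trigger applications with the corresponding instances of $A$. Part~(1) applied inside the abstraction yields $E''\approx(m)(E'\para !m(Z).A\lrangle{Z})$, and the congruence of $\approx$ with respect to parameterization (noted in the paragraph after Definition~\ref{context-bisimulation}) delivers the required equation. The only subtlety is to verify that the freshness side-conditions on $m$ are preserved when moving under the parameterizations, which is immediate by $\alpha$-conversion of the bound $Y_i$'s if necessary.
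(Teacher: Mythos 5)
Your proposal follows essentially the same route as the paper's own proof: a structural induction on $E[X]$ whose prefix, parallel, and restriction cases are discharged by Lemma~\ref{premise-factor-0}, whose application case is resolved by Lemma~\ref{premise-factor-1} together with a direct bisimulation argument for the trigger--server exchange on $m$, and whose parameterization case yields statement~(2) via congruence of $\approx$ under $\lrangle{\cdot}$. The only organizational difference is that the paper carries both statements (1) and (2) through the induction simultaneously (statement (2) is already needed in the base case $E\equiv X$, where $E[Tr_m]$ is itself an abstraction), rather than deriving (2) only at the top level, but this does not change the substance of the argument.
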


With the help of factorization theorem (Theorem \ref{factor-bigd}), below we give the proof of Theorem \ref{normal-characterization-bigd}. 
\begin{proof}[Proof of Theorem \ref{normal-characterization-bigd}]
The fact that $\approx$ implies $\cong$ barely needs argument, because the former demands more and the latter is actually a special case of it. 
So we focus on the other direction. To achieve this, we show the relation $\mathcal{R}\DEF \{ (P,Q) \,|\, P\cong Q \}$ (i.e. normal bisimilarity $\cong$) is a context bisimulation up-to $\approx$ \cite{San98}\cite{SW01a} (the definition is standard and thus omitted), by using mainly the factorization theorem. 
%\[
%\mathcal{R} \DEF \{ (P,Q) \,|\, P\cong Q \}
%\]

There are several cases to analyze in terms of the definition of context bisimulation. Notice we use weak transitions in the bisimulation, which is somewhat a standard variant of the corresponding bisimulation. Moreover we focus on the case when the first result (1) of theorem \ref{factor-bigd} applies, the case when the other applies can be handled in a similar (and easier) way. 
\begin{itemize}
\item Internal action. This case is trivial, because the clauses in context bisimulation and normal bisimulation are the same. 

\item Input. 
If $P \wt{a(A)} P'$, then we want to show that 
\begin{eqnarray}
Q \wt{a(A)} Q' \mbox{ for some $Q'$} \label{input-goal-1} \\
\mbox{ and } P' \approx\,\mathcal{R}\, \approx Q'  \label{input-goal-2} 
\end{eqnarray}
W.l.o.g. suppose $P'\equiv E[A]$ for some $E[X]$ (i.e. from $P\wt{a(X)}$ intuitively). So $P \wt{a(Tr_m)} E[Tr_m]$ for some fresh $m$.  
Since $P\cong Q$, we know
\[Q\wt{a(Tr_m)} F[Tr_m] \mbox{ for some } F 
\] and
\begin{equation}\label{mid-1}
E[Tr_m]\cong F[Tr_m]
\end{equation}
Thus 
\[Q\wt{a(A)} F[A]\DEF Q'
\] which fulfills (\ref{input-goal-1}). We know from (\ref{mid-1}) and the congruence properties of $\cong$ that
\[
(m)(E[Tr_m] \para  !m(Z).A\lrangle{Z}) \cong (m)(F[Tr_m] \para  !m(Z).A\lrangle{Z})
\] Now by factorization theorem (Theorem \ref{factor-bigd}), we have 
\[
E[A]\approx (m)(E[Tr_m] \para  !m(Z).A\lrangle{Z}) \,\mathcal{R}\, (m)(F[Tr_m] \para  !m(Z).A\lrangle{Z}) \approx F[A]
\] which arrives at (\ref{input-goal-2}).

 %\rc{\FROMHERE}

\item Output.
 If $P \wt{(\ve{c})\overline{a}A} P'$, then we want to show that (notice $\{\ve{c},\ve{d}\}\cap fn(E)=\emptyset$)
\begin{eqnarray}
Q \wt{(\ve{d})\overline{a}B} Q' \mbox{ for some $\ve{d},B,Q'$} \label{output-goal-1}\\
\mbox{and for every $E[X]$, }
(\ve{c})(E[A]\para P') \approx \; \mathcal{R}\; \approx (\ve{d})(E[B]\para Q') \label{output-goal-2}
\end{eqnarray}
The argument can be conducted in a pretty similar way to that in the previous case for input; this time one attaches a process $E[Tr_m]$ and also uses the congruence properties of $\cong$. %Thus we save the details.
 %\rc{\FROMHERE}

Since $P\cong Q$, we have $Q \wt{(\ve{d})\overline{a}B} Q'$ which fulfills (\ref{output-goal-1}), and also
\begin{equation}\label{mid-2}
(\ve{c})(P'\para !m(Z).A\lrangle{Z}) \cong  (\ve{d})(Q'\para  !m(Z).B\lrangle{Z})
\end{equation}
We know from the congruence properties and (\ref{mid-2}) that 
\begin{equation}\label{mid-3}
(m)((\ve{c})(P'\para !m(Z).A\lrangle{Z})\para E[Tr_m]) \cong  (m)((\ve{d})(Q'\para  !m(Z).B\lrangle{Z}) \para E[Tr_m])
\end{equation}
By some simple structural adjustment and the factorization theorem (Theorem \ref{factor-bigd}), we know the lhs (left hand side)  and rhs (right hand side) of (\ref{mid-3}) are equivalent to $(\ve{c})(E[A]\para P')$ and $(\ve{d})(E[B]\para Q')$ respectively. 
So we have
\[(\ve{c})(E[A]\para P') \approx \mbox{(lhs of (\ref{mid-3}))} \,\mathcal{R}\,  \mbox{(rhs of (\ref{mid-3}))} \approx (\ve{d})(E[B]\para Q')
\] This is exactly what  (\ref{output-goal-2}) says.

\end{itemize}
The proof is completed now.
%\qed
\end{proof} 

%\sep

%...

%\vspace*{-.2cm}
\noindent\textbf{\emph{Remark}}.
%We have some remarks about this section. 
%\vspace*{-.2cm}
\begin{itemize}
\item The success of the characterization of context bisimulation using normal bisimulation in $\Pi^D_n$ can be attributed to the fact that $\Pi^D_n$ processes are purely higher-order, i.e. no names can be passed but only (parameterized) processes (carrying names), and moreover no names can be parameterized. Thus one can delay the instantiation of a process parameterization by moving it elsewhere with the help of triggers. 

\item We mentioned in the introduction that in \cite{Cao06}, the index technique is utilized to show that strong normal bisimulation characterizes strong context bisimulation, within a calculus capable of both name-passing and process-passing but without any parameterization. 
A critical point there is that indices can be used to precisely pinpoint the matching of actions from the processes when going through some intermediate transformation (e.g. factorization). 
We believe, by combining the technique in that paper with the approach in this section, one can further show that $\sim$ and $\simeq$ coincide in the calculus $\Pi^D_n$. The key point in this combination is that we can reuse the approach in this section for proving the coincidence in the strong case, except that we need to harness indices to mark and filter out the extra $\tau$ actions brought about by the operation of factorization, and thus the precise matching of strong actions can be established. 

%\item 

\end{itemize}

%...

%---------------------------
% Local Variables:
% mode: LaTeX
% TeX-master: "main.tex"
% End:
% End:

%\clearpage

%--------------------------------------------------------------------------------------------
%main part 2: characterization the context bisimulation in $\Pi^d_n$

%\vspace*{-.4cm}
\section{Normal bisimulation in $\Pi^d_n$}\label{smalld-charac}
%\vspace*{-.2cm} 
In this section, we examine context bisimulation in $\Pi^d_n$, particularly $\Pi^d_1$ for simplicity.
%; the result can be readily generalized to $\Pi^d_n$ (and $\Pi^d$). 
% We show that, unlike that in $\Pi^D_1$, there is no obvious normal bisimulation in a similar way that characterizes context bisimulation, with some counterexample. We discuss the possible form of normal bisimulation following the approach in \cite{San92}\cite{San94}, and exhibit it is not trivially known to exist, let alone coincide with context bisimulation. Notice this section does not provide a formal proof of the negative result that context bisimulation in $\Pi^d_n$ cannot be characterized in a simpler way, instead we essentially say that the well-known technique based on triggers is not useful in its original form. To this end, finding a light characterization may amount to exploiting further the expressiveness of $\Pi^d_n$.
We show that, unlike that in $\Pi^D_1$, the technique of normal bisimulation \cite{San92}\cite{San94}  cannot be extended to $\Pi^d_1$ that is endowed with name parameterization instead of process parameterization. 
To this end, we show the negative fact in two steps. 
Firstly, we discuss the possible form of normal bisimulation, toward giving some intuition. Then we provide a counterexample, to show that the expected form of normal bisimulation does not work out. 
Therefore finding a useful characterization of context bisimulation may amount to exploiting further the essence (e.g. expressiveness) of $\Pi^d_n$.

\subsection*{Possible form of normal bisimulation}
Along the line of the very original idea of normal bisimulation \cite{San92}\cite{San94}, we pretend having the `normal bisimulation', among which the largest one is denoted by $\cong'$.  This would lead to the argument below.

\begin{itemize}

 \item A trigger now should be defined as $Tr_m\DEF \lrangle{z}\overline{m}z$, because it is supposed to carry names rather than processes. This immediately brings about a critical problem. That is, name-passing is not allowed in $\Pi^d_1$, and we only admit abstraction-passing. %So that \emph{trigger} is not applicable in the current setting.

\item In the definition of $\cong'$, the output clause should take the following form. 
\begin{flushleft}
If $P \st{(\ve{c})\overline{a}A} P'$ then $Q \wt{(\ve{d})\overline{a}B} Q'$ for some $\ve{d},B,Q'$, 
and it holds that ($m$ is fresh)
\[(\ve{c})(P'\para E[A]) \; \mathcal{R}\;  (\ve{d})(Q'\para  E[B])
\]  
\end{flushleft}
where particularly it should be that $E[X]\DEF !m(z).X\lrangle{z}$ in line with the trigger form. %the trigger technique \cite{San92}\cite{San94}. 
Again, the special environment $E[X]$ does not belong to the calculus $\Pi^d_1$. %, which does not consist of any explicit definition of name-passing. 

% \begin{definition}\label{normal-bisi-bigd} %[Normal bisimulation]
% A symmetric binary relation $\mathcal{R}$ on closed processes of $\Pi^d_1$ is a normal bisimulation, if
% %it is closed under substitution of names, and
% whenever $P\,\mathcal{R}\, Q$ the following properties hold: 
% \begin{enumerate}
% \item If $P \st{\tau} P'$, then $Q \wt{} Q'$ for some $Q'$ and $P'\,\mathcal{R}\, Q'$;

% \item If $P \st{a(Tr_m)} P'$ and $Tr_m\equiv \lrangle{z}\overline{m}z$ ($m$ is fresh), then $Q \wt{Tr_m} Q'$ for some $Q'$ and $P'\,\mathcal{R}\, Q'$;

% \item If $P \st{(\ve{c})\overline{a}A} P'$ then $Q \wt{(\ve{d})\overline{a}B} Q'$ for some $\ve{d},B,Q'$, 
% and it holds that ($m$ is fresh)
% \[(\ve{c})(P'\para !m(z).A\lrangle{z}) \; \mathcal{R}\;  (\ve{d})(Q'\para  !m(z).B\lrangle{z})
% \] where the $E[X]\DEF !m(z).X\lrangle{z}$ is special, compared to that in context bisimulation. 
% %and for every process $E[X]$  s.t. $\{\ve{c},\ve{d}\}\cap fn(E)=\emptyset$ it holds that $(\ve{c})(E[A]\para P') \; \mathcal{R}\;  (\ve{d})(E[B]\para Q')$.
% \end{enumerate}
% Process $P$ is normal bisimilar to $Q$, written $P\,\cong', Q$, if $P\,\mathcal{R}\, Q$ for some normal bisimulation $\mathcal{R}$. Relation $\cong'$ is called context bisimilarity. 
% %Normal bisimilarity, written $\cong$, is the largest context bisimulation.
% \end{definition}

\item As for the input clause of $\cong'$, one meets with similar obstacle.

\end{itemize}

So intuitively, the failure of trigger technique, and thus the failure of the factorization theorem, deprives $\Pi^d_1$ of the normal bisimulation. Furthermore, below we provide a counterexample to exhibit the deprival of normal bisimulation in $\Pi^d_1$.

%\subsection*{Non-coincidence with context bisimulation: a counterexample}
%\vspace*{-.2cm}
\subsection*{A counterexample}
%\vspace*{-.3cm}
We examine the following example:
$
W\DEF A\lrangle{d}, \mbox{ in which } A\DEF \lrangle{x}\overline{x}
$. 
Obviously $W$ belongs to $\Pi^d_1$ and is able to fire an action on $d$, i.e.
$
W\equiv \overline{d} \,\st{\overline{d}}\, 0
$. 
However if one tries to factorize out the subprocess $A$, some contradiction arises by the examining below. 
\begin{enumerate}
\item[1)] One is supposed to replace $A$ with a trigger of certain (general) form, say $T$, which has to be an abstraction \emph{on a name} (to remain well-typed); so $T$ must take the shape $\lrangle{z}T'$, and we have after a substitution
\begin{equation}\label{smalld-counterex-1}
W\hosub{T}{A}\equiv T'\fosub{d}{z}
\end{equation}  Now some sugar should be added, i.e. some context $F$ is needed to contain $W\hosub{T}{A}$ so that the resulting process bi-simulates $W$. Let us suppose $F$ is of the form 
\[
(\ve{c})([\cdot] \para G[A])
\] in which $G$ should have $A$, in conformance to the rationale of factorization. Then generally, in terms of bisimulation, $F[T'\fosub{d}{z}]$ should engage in some internal moves between $T'\fosub{d}{z}$ and $G[A]$, so that in its current (different) position, $A$ can do the same action $\overline{d}$ after an instantiation on $x$. It is the responsibility of $T'$ to convey the particular information of $d$ to $G$.

\item[2)] Holding back a little bit, a crux is that there is no way to transmit, with the help of any process (let alone a trigger), a concrete name for instantiation of the abstraction (e.g. in (\ref{smalld-counterex-1})) to the newly-assigned place for future access, because all the processes here are strictly higher-order. 

\end{enumerate}

%...

Hence we conclude that the technique of normal bisimulation does not work for $\Pi^d_n$. This result also sheds light on the expressiveness of $\Pi^d_n$. It reflects that name parameterization offers more flexibility than process parameterization in expressiveness so that its bisimulation does not have a similar (simpler) characterization. We give more discussion in the conclusion below.

%\sep

% \subsection*{Discussion: other form of simpler characterization?}
% Now that we know the normal bisimulation does not work in general for $\Pi^d_n$, a question then arises: is there some other form of simpler characterization of context bisimulation for $\Pi^d_n$? 
% This is really important because the capacity of handling names appears indispensable in higher-order processes to afford reasonable degree of expressiveness \cite{Tho93}\cite{Xu12}; the usefulness of $\Pi^d_n$ would decrease if a convenient characterization of its canonical bisimulation is absent. However this question is not trivial to our knowledge, because it is somewhat akin to another: can $\Pi^d_n$ mimic name-passing? Or equivalently can $\Pi^d_n$ express first-order pi-calculus \cite{MPW92}? A further investigation into this question is out of the scope of this paper, but it is definitely worthwhile to exploit in future work. 

%Hence there appears not to be an obvious normal bisimulation in $\Pi^d_n$. An alternative but not so useful way is: 
%for the factorization to work (and successively the valid normal bisimulation), one has to restrict the variable to the type of non-abstraction, in which case it is degraded to the factorization in $\Pi$.  

%---------------------------
% Local Variables:
% mode: LaTeX
% TeX-master: "main.tex"
% End:
% End:

%\clearpage

%--------------------------------------------------------------------------------------------
%conclusion and discussion

\vspace*{-.3cm}
%\section{Conclusion and future work}\label{conclusion}
\section{Conclusion}\label{conclusion}
%\vspace*{-.2cm}
This paper provides some results on the characterization of context bisimulation in parameterized higher-order processes, and thus offers some tool as well as some insights for the bisimulation theory in higher-order paradigm. Firstly, we show that when extended with process parameterization, higher-order processes possess the characterization of context bisimulation by normal bisimulation. %The trigger technique is still valid in this setting of $\Pi^D_n$. 
Secondly, we show that this technique of normal bisimulation, at least in its original form, cannot be extended to characterizing context bisimulation in presence of name parameterization. 
%This separation means that the original idea of trigger is not so useful any more in $\Pi^d_n$. 
This separation result implies that the two kinds of parameterization have some essential difference. 
%In the study of expressiveness, i.e. the comparison between process models as well as from and to classical computational models, the context bisimulation, as a de facto standrad observational equivalence, calls for a simpler way of proof. 
It will offer some potential reference for relevant research, for instance expressiveness studies.

There is some work worth further consideration.
\begin{itemize}
\item Finding some appropriate proof technique for the context bisimulation in $\Pi^d_n$.  To some extent, $\Pi^d_n$ is more useful than $\Pi^D_n$, which can strictly enhance the expressiveness of $\Pi$ \cite{LPSS10}, because it consists of some name-handling primitive, though it is still a higher-order language (i.e. no name-passing). So in order to make $\Pi^d_n$ more convenient when studying related topics, for example expressiveness and applications like modeling-verifying, there should be some proof technique for its canonical bisimulation equivalence, i.e. the context bisimulation. Since a standard normal characterization based on triggers is not available, one has to search for other ways to simplify the work on proving/checking whether two processes are context bisimilar. This may need some restriction on name abstraction or some other novel technique catering for certain motivation.   

\item Extension of the available characterization and related proof technique to more general higher-order models. For a long time, normal bisimulation and trigger technique has seen successful application and been deemed as a somewhat general method of dealing with higher-order processes (even applicable in some first-order process models). However, the result here implies that the normal approach, i.e. proof of context bisimulation up-to trigger and context, deserves more examination. Sometimes it would be better to study a general form of bisimulation in some general higher-order model. This is advantageous in that the usefulness of the available technique can be further tested, and also more essence of the behavior equivalence in higher-order processes can be hopefully revealed. A possible direction is to follow the idea in \cite{SKS11} where environmental bisimulation is proposed for higher-order models, and examine the proof technique of environmental bisimulation that is shown to coincide with canonical bisimulation in various higher-order models. In such a more general model, one may exploit some general proof technique (e.g. up-to certain normal feature) of different nature.
\end{itemize}

%---------------------------
% Local Variables:
% mode: LaTeX
% TeX-master: "main.tex"
% End:
% End:

%\clearpage

%--------------------------------------------------------------------------------------------
%\input{.tex}
%\clearpage

%--------------------------------------------------------------------------------------------

%---------------------------
% Local Variables:
% mode: LaTeX
% TeX-master: "main.tex"
% End:
% End:

%----------------------------------------------------------------------------------------------

%----------------------------------------------------------------------------------------------
\noindent\textbf{Acknowledgement}\;\;
%We thank Qiang Yin and other members of the BASICS lab in Shanghai Jiao Tong University for much helpful discussion.
The author thanks Davide Sangiorgi for many constructive comments, and Qiang Yin for much helpful discussion.
He is also grateful to the comments and suggestions from the anonymous referees.  

%-------------------------bibliography---------------------------------------
% \nocite{*}
% \bibliographystyle{eptcs}
% \bibliography{generic}
\bibliographystyle{eptcs}
\bibliography{process}

\begin{thebibliography}{10}
\providecommand{\bibitemdeclare}[2]{}
\providecommand{\surnamestart}{}
\providecommand{\surnameend}{}
\providecommand{\urlprefix}{Available at }
\providecommand{\url}[1]{\texttt{#1}}
\providecommand{\href}[2]{\texttt{#2}}
\providecommand{\urlalt}[2]{\href{#1}{#2}}
\providecommand{\doi}[1]{doi:\urlalt{http://dx.doi.org/#1}{#1}}
\providecommand{\bibinfo}[2]{#2}

\bibitemdeclare{inproceedings}{Cao06}
\bibitem{Cao06}
\bibinfo{author}{Z.~\surnamestart Cao\surnameend} (\bibinfo{year}{2006}):
  \emph{\bibinfo{title}{More on Bisimulations for Higher Order
  $\pi$-Calculus}}.
\newblock In \bibinfo{editor}{L.~\surnamestart Aceto\surnameend} \&
  \bibinfo{editor}{A.~\surnamestart In{\'g}olfsd{\'o}ttir\surnameend}, editors:
  {\sl \bibinfo{booktitle}{Proceedings of FOSSACS2006}}, {\sl
  \bibinfo{series}{LNCS}} \bibinfo{volume}{3921}, pp. \bibinfo{pages}{63--78},
  \doi{10.1007/11690634\_5}.
\newblock \bibinfo{note}{Held as Part of the Joint European Conferences on
  Theory and Practice of Software, ETAPS 2006, Vienna, Austria, March 25-31,
  2006.}

\bibitemdeclare{article}{JR05}
\bibitem{JR05}
\bibinfo{author}{A.~\surnamestart Jeffrey\surnameend} \&
  \bibinfo{author}{J.~\surnamestart Rathke\surnameend} (\bibinfo{year}{2005}):
  \emph{\bibinfo{title}{Contextual equivalence for higher-order pi-calculus
  revisited}}.
\newblock {\sl \bibinfo{journal}{Logical Methods in Computer Science}}
  \bibinfo{volume}{1(1:4)}, \doi{10.2168/LMCS-1(1:4)2005}.

\bibitemdeclare{inproceedings}{LPSS10}
\bibitem{LPSS10}
\bibinfo{author}{I.~\surnamestart Lanese\surnameend}, \bibinfo{author}{J.~A.
  \surnamestart P\'{e}rez\surnameend}, \bibinfo{author}{D.~\surnamestart
  Sangiorgi\surnameend} \& \bibinfo{author}{A.~\surnamestart
  Schmitt\surnameend} (\bibinfo{year}{2010}): \emph{\bibinfo{title}{On the
  Expressiveness of Polyadic and Synchronous Communication in Higher-Order
  Process Calculi}}.
\newblock In: {\sl \bibinfo{booktitle}{Proceedings of ICALP 2010}},
  \bibinfo{series}{LNCS}, \bibinfo{publisher}{Springer Verlag}, pp.
  \bibinfo{pages}{442--453}, \doi{10.1007/978-3-642-14162-1\_37}.

\bibitemdeclare{article}{LPSS10a}
\bibitem{LPSS10a}
\bibinfo{author}{I.~\surnamestart Lanese\surnameend}, \bibinfo{author}{J.~A.
  \surnamestart P\'{e}rez\surnameend}, \bibinfo{author}{D.~\surnamestart
  Sangiorgi\surnameend} \& \bibinfo{author}{A.~\surnamestart
  Schmitt\surnameend} (\bibinfo{year}{2011}): \emph{\bibinfo{title}{On the
  Expressiveness and Decidability of Higher-Order Process Calculi}}.
\newblock {\sl \bibinfo{journal}{Information and Computation}}
  \bibinfo{volume}{209(2)}, pp. \bibinfo{pages}{198--226},
  \doi{10.1016/j.ic.2010.10.001}.

\bibitemdeclare{inproceedings}{LPSS08}
\bibitem{LPSS08}
\bibinfo{author}{I.~\surnamestart Lanese\surnameend}, \bibinfo{author}{J.A.
  \surnamestart Perez\surnameend}, \bibinfo{author}{D.~\surnamestart
  Sangiorgi\surnameend} \& \bibinfo{author}{A.~\surnamestart
  Schmitt\surnameend} (\bibinfo{year}{2008}): \emph{\bibinfo{title}{On the
  Expressiveness and Decidability of Higher-Order Process Calculi}}.
\newblock In: {\sl \bibinfo{booktitle}{Proceedings of the 23rd Annual IEEE
  Symposium on Logic in Computer Science (LICS 2008)}},
  \bibinfo{publisher}{IEEE Computer Society}, pp. \bibinfo{pages}{145--155},
  \doi{10.1109/LICS.2008.8}.
\newblock \bibinfo{note}{Journal version in \cite{LPSS10a}}.

\bibitemdeclare{article}{Mil92}
\bibitem{Mil92}
\bibinfo{author}{R.~\surnamestart Milner\surnameend} (\bibinfo{year}{1992}):
  \emph{\bibinfo{title}{Functions as Processes}}.
\newblock {\sl \bibinfo{journal}{Mathematical Structures in Computer Science}}
  \bibinfo{volume}{2(2)}, pp. \bibinfo{pages}{119--141},
  \doi{10.1017/S0960129500001407}.
\newblock \bibinfo{note}{Research Report 1154, INRIA, Sofia Antipolis, 1990}.

\bibitemdeclare{phdthesis}{San92}
\bibitem{San92}
\bibinfo{author}{D.~\surnamestart Sangiorgi\surnameend} (\bibinfo{year}{1992}):
  \emph{\bibinfo{title}{Expressing Mobility in Process Algebras: First-order
  and Higher-order Paradigms}}.
\newblock \bibinfo{type}{Phd thesis}, \bibinfo{school}{University of
  Edinburgh}.

\bibitemdeclare{inproceedings}{San92a}
\bibitem{San92a}
\bibinfo{author}{D.~\surnamestart Sangiorgi\surnameend} (\bibinfo{year}{1992}):
  \emph{\bibinfo{title}{From $\pi$-Calculus to Higher-Order
  $\pi$-Calculus---and Back}}.
\newblock In: {\sl \bibinfo{booktitle}{Proceedings of TAPSOFT '93}}, {\sl
  \bibinfo{series}{LNCS}} \bibinfo{volume}{668}, \bibinfo{publisher}{Springer
  Verlag}, pp. \bibinfo{pages}{151--166}, \doi{10.1007/3-540-56610-4\_62}.

\bibitemdeclare{article}{San94}
\bibitem{San94}
\bibinfo{author}{D.~\surnamestart Sangiorgi\surnameend} (\bibinfo{year}{1996}):
  \emph{\bibinfo{title}{Bisimulation for Higher-order Process Calculi}}.
\newblock {\sl \bibinfo{journal}{Information and Computation}}
  \bibinfo{volume}{131(2)}, pp. \bibinfo{pages}{141--178},
  \doi{10.1006/inco.1996.0096}.
\newblock \bibinfo{note}{Preliminary version in proceedings of PROCOMET'94
  (IFIP Working Conference on Programming Concepts, Methods and Calculi), pages
  207-224, North Holland, 1994}.

\bibitemdeclare{article}{San96}
\bibitem{San96}
\bibinfo{author}{D.~\surnamestart Sangiorgi\surnameend} (\bibinfo{year}{1996}):
  \emph{\bibinfo{title}{Pi-calculus, Internal Mobility and Agent-Passing
  Calculi}}.
\newblock {\sl \bibinfo{journal}{Theoretical Computer Science}}
  \bibinfo{volume}{167(2)}, \doi{10.1016/0304-3975(96)00075-8}.
\newblock \bibinfo{note}{Extracts of parts of the material contained in this
  paper can be found in the Proceedings of TAPSOFT'95 and ICALP'95}.

\bibitemdeclare{article}{San98}
\bibitem{San98}
\bibinfo{author}{D.~\surnamestart Sangiorgi\surnameend} (\bibinfo{year}{1998}):
  \emph{\bibinfo{title}{On the Bisimulation Proof Method}}.
\newblock {\sl \bibinfo{journal}{Mathematical Structures in Computer Science}}
  \bibinfo{volume}{8(6)}, pp. \bibinfo{pages}{447--479},
  \doi{10.1017/S0960129598002527}.
\newblock \bibinfo{note}{An extended abstract in Proceedings of MFCS'95, LNCS
  969, pp. 479-488, Springer Verlag}.

\bibitemdeclare{article}{SKS11}
\bibitem{SKS11}
\bibinfo{author}{D.~\surnamestart Sangiorgi\surnameend},
  \bibinfo{author}{N.~\surnamestart Kobayashi\surnameend} \&
  \bibinfo{author}{E.~\surnamestart Sumii\surnameend} (\bibinfo{year}{2011}):
  \emph{\bibinfo{title}{Environmental bisimulations for higher-order
  languages}}.
\newblock {\sl \bibinfo{journal}{ACM Transactions on Programming Languages and
  Systems}} \bibinfo{volume}{33(1)}, p.~\bibinfo{pages}{5},
  \doi{10.1145/1889997.1890002}.

\bibitemdeclare{book}{SW01a}
\bibitem{SW01a}
\bibinfo{author}{D.~\surnamestart Sangiorgi\surnameend} \&
  \bibinfo{author}{D.~\surnamestart Walker\surnameend} (\bibinfo{year}{2001}):
  \emph{\bibinfo{title}{The Pi-calculus: a Theory of Mobile Processes}}.
\newblock \bibinfo{publisher}{Cambridge Universtity Press}.

\bibitemdeclare{phdthesis}{Tho90}
\bibitem{Tho90}
\bibinfo{author}{B.~\surnamestart Thomsen\surnameend} (\bibinfo{year}{1990}):
  \emph{\bibinfo{title}{Calculi for Higher Order Communicating Systems}}.
\newblock \bibinfo{type}{Phd thesis}, \bibinfo{school}{Department of Computing,
  Imperial College}.

\bibitemdeclare{article}{Tho93}
\bibitem{Tho93}
\bibinfo{author}{B.~\surnamestart Thomsen\surnameend} (\bibinfo{year}{1993}):
  \emph{\bibinfo{title}{Plain {CHOCS}, a Second Generation Calculus for
  Higher-Order Processes}}.
\newblock {\sl \bibinfo{journal}{Acta Informatica}} \bibinfo{volume}{30(1)},
  pp. \bibinfo{pages}{1--59}, \doi{10.1007/BF01200262}.

\bibitemdeclare{inproceedings}{VD98}
\bibitem{VD98}
\bibinfo{author}{J.-L. \surnamestart Vivas\surnameend} \&
  \bibinfo{author}{M.~\surnamestart Dam\surnameend} (\bibinfo{year}{1998}):
  \emph{\bibinfo{title}{From Higher-Order Pi-Calculus to Pi-Calculus in the
  Presence of Static Operators}}.
\newblock In: {\sl \bibinfo{booktitle}{Proceedings of the 9th International
  Conference on Concurrency Theory}}, {\sl \bibinfo{series}{LNCS}}
  \bibinfo{volume}{1466}, pp. \bibinfo{pages}{115--130},
  \doi{10.1007/BFb0055619}.
\newblock \bibinfo{note}{Nice, France, September 8-11, 1998}.

\bibitemdeclare{article}{Xu12}
\bibitem{Xu12}
\bibinfo{author}{Xian \surnamestart Xu\surnameend} (\bibinfo{year}{2012}):
  \emph{\bibinfo{title}{Distinguishing and Relating Higher-order and
  First-order Processes by Expressiveness}}.
\newblock {\sl \bibinfo{journal}{Acta Informatica}} \bibinfo{volume}{49(7-8)},
  pp. \bibinfo{pages}{445--484}, \doi{10.1007/s00236-012-0168-9}.

\end{thebibliography}

%----------------------------------------------------------------------------------------------
\clearpage
%\sep
%----------------------------------------------------------------------------------------------

\appendix
\noindent\textbf{\Large Appendix}

%----------------------------------------------------------------------------------------------
%proofs: including the proof of factorization theorem

\section{Proof of Section \ref{bigD-charac}}\label{appendix-proof-bigd-factor}
In this appendix, we give the proof of the factorization theorem in section \ref{bigD-charac}.
\begin{proof}[Proof of Theorem \ref{factor-bigd}: Factorization]
The proof is by induction on the structure of $E$. The cases 1,2,3 are the base cases. 
\begin{enumerate}
%\item $E\equiv 0$ or $E\equiv Y$ and $Y\neq X$. This case is trivial. 
\item $E$ is $0$ or $E$ is $Y$ and $Y\neq X$. These cases are trivial. 

\item %$E\equiv X$. 
$E$ is $X$.  
Notice this time $E[Tr_m]$ is $Tr_m\equiv \lrangle{Z}\overline{m}Z\equiv \lrangle{Y}\overline{m}Y$ (up-to alpha-conversion), so  the second statement of this theorem applies and the two terms of interest are 
\[
A \quad\mbox{ and }\quad \lrangle{Y}((m)(\overline{m}Y\para !m(Z).A\lrangle{Z})
\]
Suppose $A\equiv \lrangle{Z'}F$, then it is easy to verify that for each $B$ one has
\[
F\lrangle{B} \approx (m)(\overline{m}B\para !m(Z).F\hosub{Z}{Z'}
\] which completes this case. 

%\bc{This case is also trivial because the two process on either side of the desired result, i.e. $A$ and ..., is obviously equivalent to each other.
%$0$, since $A$ should be an abstraction itself.}\rc{is that true? there is a type difference here; so is it reasonable to equate two processes of different types?}

\item %$E\equiv X\lrangle{E_1}$. 
$E$ is $X\lrangle{E_1}$. 
This can be proven by showing the following relation $\mathcal{R}$ is a context bisimulation up-to $\sim$ (this technique is standard \cite{San94}\cite{SW01a} and we omit its definition here).  
\[\mathcal{R} \DEF \{(A\lrangle{E_1}, (m)(Tr_m\lrangle{E_1}\para !m(Z).A\lrangle{Z}) \,|\, m \mbox{ is fresh }\} \,\cup \approx
\] Let $(A\lrangle{E_1}, (m)(Tr_m\lrangle{E_1}\para !m(Z).A\lrangle{Z}) \in \mathcal{R}$ and $A\equiv \lrangle{Z'}T$; so the pair is actually $$(T\hosub{E_1}{Z'},  (m)(\overline{m}E_1 \para !m(Z).T\hosub{Z}{Z'})$$
There are mainly two cases to consider. 
\begin{itemize}
\item $(m)(\overline{m}E_1 \para !m(Z).T\hosub{Z}{Z'}) \st{\alpha} T'$. Then $\alpha$ must be $\tau$, and thus  
$$T'\equiv (m)(T\hosub{E_1}{Z'}\para !m(Z).T\hosub{Z}{Z'})$$ 
By $T\hosub{E_1}{Z'}\wt{} T\hosub{E_1}{Z'}$ (null transition),  since $m$ is fresh, it can be easily seen that, 
\[
T\hosub{E_1}{Z'} \sim T\hosub{E_1}{Z'} \,\mathcal{R}\, T\hosub{E_1}{Z'}  \sim T'
\]

\item $T\hosub{E_1}{Z'} \st{\alpha} T_1$. Then this is simulated by
\[
\begin{array}{ll}
 & (m)(\overline{m}E_1 \para !m(Z).T\hosub{Z}{Z'}) \\
\st{\tau} & (m)(T\hosub{E_1}{Z'}\para !m(Z).T\hosub{Z}{Z'}) \\
\st{\alpha} & (m)(T_1\para !m(Z).T\hosub{Z}{Z'}) \DEF T_2
\end{array}
\] So it holds in a straightforward way that
\[
T_1 \sim T_1 \,\mathcal{R}\, T_1 \sim T_2
\]
\end{itemize}
%\sepp

\noindent The following cases 4,5,6,7,8,9 are cases involving the induction hypothesis (ind. hyp. for short). Notice that we will only consider the case when statement (1) in the theorem applies, and the case for (2) can be dealt with similarly.

\item %$E\equiv \lrangle{Y}E_1$. % (\rc{need the definition on abstractions}) 
$E$ is $\lrangle{Y}E_1$.
Then we have by ind. hyp.
\[
\begin{array}{ll}
& E[A] \equiv \lrangle{Y}E_1[A] \\
\approx & \lrangle{Y}((m)(E_1[Tr_m] \para  !m(Z).A\lrangle{Z}))
\end{array}
\] To conclude this case, we have to show
\[\lrangle{Y}((m)(E_1[Tr_m] \para  !m(Z).A\lrangle{Z})) \approx  \lrangle{Y}((m)(E_1[Tr_m] \para  !m(Z).A\lrangle{Z}))
\]
This is immediate and the right-hand-side is exactly the second claim (2) of this theorem.

\item %$E\equiv \overline{a}E_2.E_1$. 
$E$ is $\overline{a}E_2.E_1$. 
Then we have
\[
\begin{array}{llr}
& E[A] \equiv \overline{a}E_2[A].E_1[A] & \\
\approx & \overline{a}[(m)(E_2[Tr_m] \para  !m(Z).A\lrangle{Z})].((m)(E_1[Tr_m] \para  !m(Z).A\lrangle{Z})) & (\mbox{ind. hyp.}) \\
\approx & (m)(\overline{a}[(m)(E_2[Tr_m] \para  !m(Z).A\lrangle{Z})].E_1[Tr_m] \para  !m(Z).A\lrangle{Z}) & (\mbox{Lemma \ref{premise-factor-0}(1)}) \\
\approx & (m)(\overline{a}E_2[Tr_m].E_1[Tr_m] \para  !m(Z).A\lrangle{Z}) & (\mbox{Lemma \ref{premise-factor-0}(2)}) \\
\end{array}
\] The last equation gives exactly what we expect.

%...

\item %$E\equiv a(Y).E_1$. 
 $E$ is $a(Y).E_1$. 
This is similar (and easier) than the previous case. 

\item %$E\equiv E_1\para E_2$. 
$E$ is $E_1\para E_2$. 
Then we have 
\[
\begin{array}{llr}
& E[A] \equiv E_1[A]\para E_2[A] & \\
\approx & (m)(E_1[Tr_m] \para  !m(Z).A\lrangle{Z}) \para (m)(E_2[Tr_m] \para  !m(Z).A\lrangle{Z}) & (\mbox{ind. hyp.}) \\
\approx & (m)(E_1[Tr_m] \para E_2[Tr_m] \para  !m(Z).A\lrangle{Z}) &  (\mbox{Lemma \ref{premise-factor-0}(3)})
\end{array}
\] The last equation completes this case.

%...

\item %$E\equiv (c)E_1$. 
$E$ is $ (c)E_1$. 
This is similar to the previous case.

%...

\item %$E\equiv E_2\lrangle{E_1}$. 
 $E$ is $E_2\lrangle{E_1}$. 
This case can be reduced to the case $E$ is $Y\lrangle{E_1}$ and $Y\neq X$, because if $E_2$ is not a variable (i.e. an abstraction) or is $X$, then it can be handled in a way that falls into one of the previous cases (up-to structural congruence).  So we have 
\[
\begin{array}{ll}
& E[A] \\
\equiv & Y\lrangle{E_1[A]} \\
\approx & Y\lrangle{(m)(E[Tr_m] \para  !m(Z).A\lrangle{Z})} \DEF T \qquad (\mbox{ind. hyp.}) %(\mbox{ind. hyp. and congruence})
\end{array}
\]
We want to show that 
$T \approx (m)(Y\lrangle{E_1[Tr_m]} \para  !m(Z).A\lrangle{Z})
$ i.e. for every $B$, %$B\equiv \lrangle{Z'}T_1$ 
\[B\lrangle{(m)(E_1[Tr_m] \para  !m(Z).A\lrangle{Z})} \approx (m)(B\lrangle{E_1[Tr_m]} \para  !m(Z).A\lrangle{Z}) 
\] This is immediately from Lemma \ref{premise-factor-1}. 
\end{enumerate}

The proof is now completed.
%\qed
\end{proof}

%---------------------------
% Local Variables:
% mode: LaTeX
% TeX-master: "main.tex"
% End:
% End:

%----------------------------------------------------------------------------------------------

\end{document}